\newcommand{\Mod}[1]{\ (\textup{mod}\ #1)}
\newcommand{\etal}{\emph{et al.}}
\newcommand{\eg}{\emph{e.g.}}
\newcommand{\ie}{\emph{i.e.}}
\def\F{{\mathbb {F}}}
\def\N{{\mathbb {N}}}
\def\v{{\mathbf{v}}}
\def\bS{{\mathbf{S}}}
\def\bR{{\mathbf{R}}}
\def\u{{\mathbf{u}}}
\def\s{{\mathbf{s}}}
\def\m{{\mathbf{m}}}
\def\0{{\mathbf{0}}}
\def\1{{\mathbf{1}}}
\def\a{{\mathbf{a}}}
\def\b{{\mathbf{b}}}
\def\mW{{\mathcal{W}}}
\def\mM{{\mathcal{M}}}
\def\mG{{\mathcal{G}}}
\def\mP{{\mathcal{P}}}
\newcommand{\bbra}[1]{\left\llbracket{#1}\right\rrbracket}
\newcommand{\floor}[1]{\left\lfloor{#1}\right\rfloor}
\newcommand{\In}{\mathop{{\tt Inv}}}
\newcommand{\flip}{\mathop{{\tt Flip}}}
\newcommand{\double}{\mathop{{\tt Double}}}
\newcommand{\lra}[1]{\underleftrightarrow{#1}}
\journalname{}
\begin{document}
\title{Binary de Bruijn Sequences via Zech's Logarithms}
\author{Zuling Chang \and Martianus Frederic Ezerman \and Adamas Aqsa Fahreza \and San Ling 
\and\\Janusz Szmidt \and Huaxiong Wang}
\authorrunning{Z.~Chang et al.}
\institute{
Z. Chang \at School of Mathematics and Statistics, Zhengzhou University, Zhengzhou 450001, China\\
\email{zuling\textunderscore chang@zzu.edu.cn}
\and
M. F. Ezerman \and A. A. Fahreza \and S. Ling \and H. Wang \at Division of Mathematical Sciences, School of Physical and Mathematical Sciences,\\
Nanyang Technological University, 21 Nanyang Link, Singapore 637371\\
\email{\{fredezerman,adamas,lingsan,HXWang\}@ntu.edu.sg}
\and
J.~Szmidt \at Military Communication Institute, ul. Warszawska 22 A, 05-130 Zegrze, Poland\\
\email{j.szmidt@wil.waw.pl}
}

\date{Received: date / Accepted: date}
\maketitle
	
\begin{abstract}
The focus of this work is to show how to combine Zech's logarithms and each of the cycle joining and cross-join pairing methods to construct binary de Bruijn sequences of any order. A basic implementation is supplied as a proof-of-concept.

The cycles, in the cycle joining method, are typically generated by a linear feedback shift register. We prove a crucial characterization that determining Zech's logarithms is equivalent to identifying conjugate pairs shared by any two distinct cycles. This speeds up the task of building a connected adjacency subgraph that contains all vertices of the complete adjacency graph. Distinct spanning trees in either graph correspond to cyclically inequivalent de Bruijn sequences. As the cycles are being joined, guided by the conjugate pairs, we track the changes in the feedback function. Certificates of star or almost-star spanning trees conveniently handle large order cases.

The characterization of conjugate pairs via Zech's logarithms, as positional markings, is then adapted to identify cross-join pairs. A modified $m$-sequence is initially used, for ease of generation. The process can be repeated on each of the resulting de Bruijn sequences. We show how to integrate an analytic tool, attributed to Fryers, in the process.

Most prior constructions in the literature measure the complexity of the corresponding bit-by-bit algorithms. Our approach is different. We aim first to build a connected adjacency subgraph that is certified to contain all of the cycles as vertices. The ingredients are computed just once and concisely stored. Simple strategies are offered to keep the complexities low as the order grows.

\keywords{Binary de Bruijn sequence \and cycle structure \and conjugate pair \and cross-join pair \and Zech's logarithms}
\subclass{11B50 \and 94A55 \and 94A60}
\end{abstract}

\section{Introduction}\label{sec:intro}

A binary {\it de Bruijn sequence} of order $n \in \N$ has period $N=2^n$ in which each $n$-tuple occurs exactly once. Up to cyclic equivalence, there are $2^{2^{n-1}-n}$ of them~\cite{Bruijn46}. A naive approach capable of generating \emph{all} of these sequences, \eg, by generating all Eulerian cycles in the de Bruijn graph, requires an exponential amount of space. 

Adding a $0$ to the longest string of $0$s in a {\it maximal length sequence}, also known as an $m$-sequence, of period $2^n-1$ produces a de Bruijn sequence of order $n$. There are $\Lambda_{n}:=\frac{1}{n}\phi(2^n-1)$ such $m$-sequences where $\phi(\cdot)$ is the Euler totient function. There is a bijection between the set of all $m$-sequences and the set of primitive polynomials of degree $n$ in $\F_{2}[x]$. As $n$ grows large, $\Lambda_n$ soon becomes miniscule in comparison to $2^{2^{n-1}-n}$. In terms of applications, there is quite a diversity in the community of users. This results in varied criteria as to what constitute a good construction approach. 

Mainly for cryptographic purposes, there has been a sustained interest in efficiently generating a good number of de Bruijn sequences of large orders. For stream cipher applications (see, \eg, the 16 eSTREAM finalists in~\cite{Robshaw2008}), it is desirable to have high unpredictability, \ie, sequences with high nonlinearity, while still being somewhat easy to implement. This requires multiple measures of complexity to be satisfied. Due to their linearity, modified $m$-sequences are unsuitable for cryptographic applications.

Bioinformaticians want large libraries of sequences to use in experiments~\cite{Hume2014} and in genome assembly~\cite{CPT11}. Their design philosophy imposes some Hamming distance requirements, since a library must not have sequences which are too similar to one another or exhibiting some taboo patterns. Secrecy or pseudorandomness is often inconsequential. Designers of very large scale integrated circuits often utilize binary de Bruijn multiprocessor network~\cite{Samatham1989}. Important properties include the diameter and the degree of the network. The objective here is to have a small layout that still allows for many complete binary trees to be constructed systematically for extensibility, fault-tolerance, and self diagnosability.

\smallskip

\noindent
{\bf Prior Literature}

Given its long history and numerous applications, the quest to efficiently generate de Bruijn sequences has produced a large literature and remains very active. Most known approaches fall roughly into three types, each to be briefly discussed below. In terms of complexity, it is customary to specify the memory and time requirements to generate the next bit, \ie, the last entry in the next state given a current state. Such an algorithm is called a \emph{bit-by-bit algorithm}. 

As the name suggests, the {\it cycle joining} approach builds de Bruijn sequences by combining disjoint cycles of smaller periods \cite{Fred82}. One begins, for example, with a linear feedback shift register (LFSR) with an irreducible but nonprimitive characteristic polynomial of degree $n$. It produces disjoint cycles, each with a period that divides $2^n-1$. We identify a {\it conjugate pair} between any two distinct cycles and combine the cycles into a longer cycle. We repeat the process until all disjoint cycles have been combined into a de Bruijn sequence. The mechanism can of course be applied on {\it all} identifiable conjugate pairs.

Fredricksen applied the method on pure cycling register of length $n$ in \cite{Fred75}. The bit-by-bit routine took $6n$ bits of storage in $n$ units of time. Etzion and Lempel in \cite{EL84} considered a shift register that produced many short cycles which were then joined together. They proposed two constructions based on, respectively, pure cyclic registers (PCRs) and pure summing registers (PSRs). For a PCR, the number of produced sequence is exponential with base $2$ (see \cite[Theorem~2]{EL84} for the exact formula), with time complexity $\mathcal{O}(n)$ to produce the next bit. It was quite expensive in memory, requiring approximately $3n$ plus the exponent of $2$ in the exact formula. For PSR, the number of produced sequence is $\approx 2^{\frac{n^2}{4}}$ with its bit-by-bit algorithm costing $\mathcal{O}(n)$ in time and $\mathcal{O}(n^2)$ in memory. 

Huang showed how to join all the pure cycles of the complementing circulating register $f(x_1,\ldots,x_n)=\overline{x_1}$ in \cite{Huang90}. Its bit-by-bit algorithm took $4n$ bits of storage and $4n$ units of time. The resulting sequences tended to have a good local $0$ and $1$ balance. In the work of Jansen \etal \cite{JFB91}, the cycles were generated by an LFSR with feedback function $f(x)$ that can be factored into $r$ irreducible polynomials of the same degree $m$. The number of produced de Bruijn sequences was approximately $\mathcal{O}(2^{2n/\log(2n)})$. Its bit-by-bit complexity was $3n$ bits of storage and at most $4n$ FSR shifts. Numerous subsequent works, until very recently, are listed with the details on their input parameters and performance complexity in~\cite[Table 4]{Chang2019}. 
 
The {\it cross-join pairing} method begins with a known de Bruijn sequence, usually a modified $m$-sequence. One then identifies cross-join pairs that allow the sequence, seen as a cycle of period $2^n$, to be cut and reconnected into inequivalent de Bruijn sequences from the initial one. Two representative works on this method were done by Helleseth and Kl{\o}ve in~\cite{HK91} and by Mykkeltveit and Szmidt in~\cite{MS15}.

The third method, where we collect rather distinct approaches, is less clear-cut than the first two. The defining factor is that they follow some {\it assignment rules} or {\it preference functions} in producing one symbol at a time using the previous $n$ symbols. Some clarity can perhaps be given by way of several recent examples. 

A {\it necklace} is a string in its lexicographically smallest rotation. Stevens and Williams in~\cite{SW14} presented an application of the necklace prefix algorithm to construct de Bruijn sequences. Their loopless algorithm took $\mathcal{O}(n)$ time. They significantly simplified the FKM algorithm, attributed to Fredricksen, Kessler, and Maiorana, that concatenated the aperiodic prefixes of necklaces in lexicographic order. In \cite{SWW16}, Sawada \etal\, gave a simple shift-rule, \ie, a function that maps each length $n$ substring to the next length $n$ substring, for a de Bruijn sequence of order $n$. Its bit-by-bit algorithm ran in costant time using $\mathcal{O}(n)$ space. Instead of concatenating necklaces in lexicographic order, Dragon \etal\, in \cite{Dra18} used the co-lexicographic order to construct a de Bruijn sequence in $\mathcal{O}(n)$ time. Natural subsequences of the lexicographically least sequence had been shown to be de Bruijn sequences for interesting subsets of strings. The lexicographically-least de Bruijn sequence, due to Martin in 1934, is affectionately known as the {\it granddaddy}. 

The use of preference functions is prevalent in many greedy algorithms for binary de Bruijn sequence. A general greedy algorithm, recently proposed by Chang \etal\, in~\cite{CEF19}, included a discussion on how such functions fit into a unifying picture.

\smallskip
\noindent
{\bf Overview of our techniques}

We pinpoint the respective exact states in any conjugate pair shared by any two distinct cycles to execute the cycle joining process. At the core of our method is the use of Zech's logarithms to characterize the conjugate pairs as positional markings. The approach works particularly well when the characteristic polynomial of the LFSR is an irreducible polynomial or a product of distinct irreducible polynomials. One can identify and explicitly generate all of the component cycles with irreducible characteristic polynomial based on one special state.

We take a different route in the implementation. Instead of generating one de Bruijn sequence on the fly by using a bit-by-bit procedure, our mechanism ensures that users have a \emph{ready access to a large supply} of binary de Bruijn sequences. In its simplest form, the input consists of a primitive polynomial $p(x)$ of degree $n$ having $\alpha$ as a root and a positive integer $t$ that divides $2^n-1$ such that there exists a polynomial $q(x)$ of degree $n$ having $\beta = \alpha^t$ as a root. The cycle joining method is then applied to the cycles whose characteristic polynomial is $q(x)$. 

Preparatory computational tasks, that need to be done only once, supply the required ingredients to construct a connected adjacency subgraph $\widetilde{\mG}$. The graph contains all vertices of the complete adjacency graph $\mG$ whose vertices are the $t+1$ cycles. To build de Bruijn sequences from the stored $\widetilde{\mG}$, we can use Knuth's Algorithm S \cite[pp. 464 ff.]{Knuth4A} or an alternative algorithm \cite[Algorithm~5]{Chang2019}, due to Chang \etal\,, to generate all of the spanning trees. The two algorithms share the same complexity, as detailed in \cite[Section 8]{Chang2019}. If so desired, one can quickly choose a spanning tree at random by running Broder's Algorithm~\cite{Bro89}. Certificates that guarantee the existence of particular spanning trees in $\mG$ allow us to practically handle large orders, as will be demonstrated by examples. This simple strategy keeps the complexities low as the order grows.

We then show how to adapt finding conjugate pairs shared by two smaller cycles in the cycle joining method to finding two conjugate pairs in a given de Bruijn sequence to use as a cross-join pair, showing how closely connected the two methods are. Our constructions bypass the need for ``scaling,'' which has been a common theme in the studies of NLFSRs since its introduction by Lempel in~\cite{Lempel70}. Some notable recent investigations include the work of Dubrova in~\cite{Dub13} and that of Mandal and Gong in~\cite{MG13}.

Relevant notions and results on FSRs and Zech's logarithms are reviewed in Section~\ref{sec:prelims}, with new properties of the logarithms given towards the end. Section~\ref{sec3} characterizes conjugate pairs by Zech's logaritms based on a carefully chosen representation of the states. Section~\ref{sec:spantree} discusses the adjacency graph and methods to identify its spanning trees. A basic software impementation for the cycle joining method is given in Section~\ref{sec:program}. Section~\ref{sec:prod} integrates a recent proposal of using the product of irreducible polynomials as the characteristic polynomial of the starting LFSR into the current framework. Adapting the Zech's logarithms approach to identify cross-join pairs in an $m$-sequence is the core of Section~\ref{sec:cross}. We end with some conclusions and a general open problem in Section~\ref{sec:con}.

\section{Preliminaries}\label{sec:prelims}
For integers $k < \ell$, let $\bbra{k,\ell}$ denote $\{k,k+1,\ldots,\ell-1,\ell\}$. A {\it cyclotomic coset} of $2$ modulo $2^n-1$ containing $i$ is $D_i=\{i,2i,\ldots,2^{n_i-1} i\}$ with $n_i$ the least positive integer such that $i \equiv 2^{n_i} i \Mod{2^n-1}$. 
Obviously $n_i \mid n$. For each $i$, the least integer in $D_i$ is its coset leader. The set of all coset leaders is called a {\it complete set of coset representatives}, denoted by $\mathcal{R}_n$.

\subsection{Feedback Shift Registers}

An {\it $n$-stage shift register} is a clock-regulated circuit with $n$ consecutive units, each storing a bit. As the clock pulses, each bit is shifted to the next stage in line. A binary code is generated if one adds a feedback loop that outputs a new bit $s_n$ based on the $n$ bits $\s_0= (s_0,\ldots,s_{n-1})$ called an {\it initial state}. The codewords are the $n$-bit states. The Boolean function $h$ that outputs $s_n$ on input $\s_0$ is called its {\it feedback function}. The {\it algebraic normal form} (ANF) of $h$ is $\sum a_{i_1,i_2,\ldots,i_k} x_{i_1} x_{i_2} \cdots x_{i_k}$ with $a_{i_1,i_2,\ldots,i_k} \in \F_2$ and the sum is over all $k$-subsets $\{{i_1,i_2,\ldots,i_k}\} \subseteq \bbra{0,n-1}$. The {\it degree} of $h$ is the largest $k$ for which $a_{i_1,i_2,\ldots,i_k} \neq 0$.

A {\it feedback shift register} (FSR) outputs $\s=s_0,s_1,\ldots,s_n,\ldots$ satisfying the recursion $s_{n+\ell} = h(s_{\ell},s_{\ell+1},\ldots,s_{\ell+n-1})$ for $\ell \geq 0$. If $s_{i+N}=s_i$ for all $i \geq 0$, then $\s$ is {\it $N$-periodic}
or {\it with period $N$}, denoted by $\s= (s_0,s_1,s_2,\ldots,s_{N-1})$. The {\it $i$-th state} of $\s$ is $\s_i= (s_i,s_{i+1},\ldots,s_{i+n-1})$ and $\s_{i+1}$ is the {\it successor} of $\s_i$. We use $\0$ or $\0^k$ to denote the all zero vector or sequence of period $1$. The latter is preferred when the length $k$ needs to be specified.

Distinct initial states generate distinct sequences, forming the set $\Omega(h)$ of $2^n$ elements. A {\it state operator} $T$ turns $\s_i$ into $\s_{i+1}$, \ie, $\s_{i+1}=T \s_i$. 
If $\s$ has a state $\s_i$ and period $e$, the $e$ \emph{distinct} states of $\s$ are
$\s_i, T \s_i = \s_{i+1},\ldots, T^{e-1} \s_i = \s_{i+e-1}$. The {\it shift operator} $L$ sends $\s$ to $L\s=L(s_0,s_1,\ldots,s_{N-1})=(s_1,s_2,\ldots,s_{N-1},s_0)$, 
with the convention that $L^0\s=\s$. The set $[\s]:=\{\s,L\s,L^2\s,\ldots,L^{N-1}\s \}$ 
is a {\it shift equivalent class} or a {\it cycle} in $\Omega(h)$. 
One partitions $\Omega(h)$ into cycles and writes the {\it cycle structure} as
\begin{equation}\label{eq:cycle}
[\s_1] \cup [\s_2] \cup \ldots \cup [\s_r] \mbox{ if } \bigcup_{i=1}^{r} [\s_i] = \Omega(h).
\end{equation}

An FSR with a linear feedback function is {\it linear} (or an LFSR) and {\it nonlinear} (or an NLFSR) otherwise. The {\it characteristic polynomial} of an $n$-stage LFSR is 
\begin{equation}\label{equ:cp}
f(x)=x^n+c_{n-1}x^{n-1}+\ldots+c_1x+c_0\in \F_2[x]
\end{equation}
when the feedback function is $h(x_0,\ldots,x_{n-1})= \sum_{i=0}^{n-1} c_i x_i$. To ensure that all generated sequences are periodic, $c_0 \neq 0$. 
A sequence $\s$ may have many characteristic polynomials. The one with the 
lowest degree is its {\it minimal polynomial}. It represents the LFSR of 
shortest length that generates $\s$. Given an LFSR with characteristic 
polynomial $f(x)$, the set $\Omega(h)$ is also denoted by $\Omega(f(x))$. 
To $f(x)$ in (\ref{equ:cp}), one associates a matrix
\begin{equation}\label{comx}
A_{f}:=\begin{pmatrix}
0 & 0  & \ldots & 0 & c_0 \\
1 & 0  & \ldots & 0 & c_1 \\
0 & 1  & \ldots & 0 & c_2 \\
\vdots & \vdots & \ddots & \vdots & \vdots\\
0 & 0  & \ldots & 1 & c_{n-1}
\end{pmatrix}.
\end{equation}
On input state $\s_0$, the state vectors of the resulting sequence are $\s_{j}=\s_0 A^{j}$ for $j \in \{0,1,2,\ldots\}$ and $\s_{i+1}=\s_iA=T\s_i$.

If the minimal polynomial of a sequence is primitive with degree $n$ and $\alpha$ as a root, then it can be obtained by applying the trace map from $\F_{2^n}$ onto $\F_2$ on the sequence $1,\alpha,\alpha^2,\ldots,\alpha^{2^n-2}$. The sequence is said to be a {\it maximal length sequnce} or an $m$-sequence since it has the maximal period among all sequences generated by any LFSR with minimal polynomial of degree $n$. Let $\m$ denote an $m$-sequence. A sequence $\u$ is a {\it $d$-decimation} sequence of $\s$, denoted by $\u=\s^{(d)}$ if $u_j=s_{d \cdot j}$ for all $j \geq 0$. A $d$-decimation $\m^{(d)}$ of $\m$ is also an $m$-sequence if and only if $\gcd(d,2^n-1)=1$. More properties of sequences in relation to their characteristic and minimal polynomials can be found in~\cite[Chapter~4]{GG05} and \cite[Chapter~8]{LN97}.

\subsection{Zech's Logarithms}\label{subsec:zech}

Among computational tools over finite fields we have {\it Zech's logarithm} (see \eg,~\cite[page~39]{GG05}). The logarithm is often referred to as {\it Jacobi's logarithm}~\cite[Exercise 2.8 and Table B]{LN97}. It was Jacobi who introduced the notion and tabulated the values for $\F_p$ with $p \leq 103$ in~\cite{Jacobi}.
For $\ell \in \bbra{1,2^n-1}\cup\{-\infty\}$, the Zech's logarithm $\tau(\ell)$ relative to $\alpha$ is defined by $1+\alpha^{\ell}=\alpha^{\tau(\ell)}$ where $\alpha^{-\infty}=0$. It induces a permutation on $\bbra{1,2^n-2}$. Huber established the following properties of Zech's logarithms over $\F_q$ and provided the Zech's logarithm tables for $\F_{2^n}$ with $2 \leq n \leq 11$ in~\cite{Huber90}. 

\begin{enumerate}
\item It suffices to find the logarithms relative to one primitive element. 
Let distinct primitive polynomials $p(x)$ and $q(x)$ be of degree $n$ with respective roots $\alpha$ and $\delta$. 
Then $\delta=\alpha^b$ for some $b$ with $\gcd(b,2^n-1)=1$. Let $\tau_{n,\alpha}(k)$ and $\tau_{n,\delta}(k)$ 
denote the respective logarithms of $k \in \bbra{1,2^n-2}$ relative to $\alpha$ and $\delta$. Then 
\[
1+\delta^k=1+\alpha^{b \cdot k}=\alpha^{\tau_{n,\alpha}(b \cdot k)}=
\alpha^{bb^{-1}\tau_{n,\alpha}(b \cdot k)}=\delta^{b^{-1}\tau_{n,\alpha}(b \cdot k)}.
\]
Hence, $\tau_{n,\delta}(k)\equiv b^{-1}\tau_{n,\alpha}(b \cdot k)\Mod{(2^n-1)}$. With a primitive element fixed, the notation is $\tau(k)$ or $\tau_n(k)$ to emphasize $n$.
	
\item The $\flip$ map is given by $\tau_n(\tau_n(k))=k$. Knowing $\tau_n(k)$ for any $k$ in a cyclotomic coset $D_j$ is sufficient to find $\tau_n(2^i k)$ by using the $\double$ map
\begin{equation}\label{eq:double}
\tau_n(2k)\equiv 2 \tau_n(k)\Mod{(2^n-1)}.
\end{equation}
The $\flip$ and $\double$ maps send cosets onto cosets of the same size. 
	
\item Let ${\In}(j)=2^n-1-j \equiv -j \Mod{(2^n-1)}$. Then
\begin{equation}\label{tau2}
\tau_n(\In(j))\equiv \tau_n(j)-j \Mod{(2^n-1)}.
\end{equation}
$\In(k)$ maps a coset onto a coset of the same size. 
	
\item Let $h(x)$ be a primitive polynomial of degree $m$ having a root 
$\beta$. Let $m \mid n$ and $\beta=\alpha^r$ with $r=\frac{(2^n-1)}{(2^m-1)}$. If the Zech's logarithms relative to $\beta$ are known, then $1+\alpha^{r \cdot j}=1+\beta^j =\beta^{\tau_m(j)}=\alpha^{r \cdot \tau_m(j)}$.
Hence,
\begin{equation}\label{tau3}
\tau_n(r \cdot j) \equiv r \cdot \tau_m(j) \Mod{(2^n-1)}.
\end{equation}
\end{enumerate}

Repeatedly applying $\flip$ and $\In$ induces a cycle of $6$ cosets, except in rare cases (see~\cite{Huber90}). Using the $\double$ map, one then gets the value of $\tau_n(k)$ for all $k$ in the union of these cosets. To complete the table, Huber suggested the use of {\it key elements}, each corresponding to a starting coset. Instead of following the suggestion, we introduce a more efficient route that reduces the storage need.

\begin{lemma}\label{lem:tau}
For known $\tau_n(i)$, $\tau_n(j)$, and $\tau_n(i+j)$,
\begin{equation}\label{tau4}
\tau_n(\tau_n(i+j)-\tau_n(j))\equiv \tau_n(i) + j -\tau_n(j) \pmod{(2^n-1)}.
\end{equation}
\end{lemma}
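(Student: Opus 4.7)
The plan is to work directly from the defining relation $1+\alpha^{\ell}=\alpha^{\tau_n(\ell)}$ and manipulate the hypothesised values as ratios in $\F_{2^n}^*$. Since $\alpha$ has multiplicative order $2^n-1$, any equality of powers of $\alpha$ yields a congruence modulo $2^n-1$, which is exactly the form of the conclusion.

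First I would rewrite the three hypotheses as $\alpha^{\tau_n(i)} = 1+\alpha^{i}$, $\alpha^{\tau_n(j)} = 1+\alpha^{j}$, and $\alpha^{\tau_n(i+j)}=1+\alpha^{i+j}$. Dividing the third by the second puts the exponent $\tau_n(i+j)-\tau_n(j)$ into play:
\[
\alpha^{\tau_n(i+j)-\tau_n(j)} \;=\; \frac{1+\alpha^{i+j}}{1+\alpha^{j}}.
\]
Now apply the definition of the Zech's logarithm to the left-hand side by adding $1$; because we are in characteristic $2$, the sum $1+\frac{1+\alpha^{i+j}}{1+\alpha^{j}}$ collapses nicely after combining over a common denominator.

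The key algebraic step is the simplification
\[
1+\frac{1+\alpha^{i+j}}{1+\alpha^{j}} \;=\; \frac{\alpha^{j}+\alpha^{i+j}}{1+\alpha^{j}} \;=\; \frac{\alpha^{j}(1+\alpha^{i})}{1+\alpha^{j}} \;=\; \frac{\alpha^{j}\cdot \alpha^{\tau_n(i)}}{\alpha^{\tau_n(j)}} \;=\; \alpha^{\tau_n(i)+j-\tau_n(j)},
\]
where I have invoked the first and second hypotheses in the penultimate equality. On the other hand, by the very definition of $\tau_n$ applied to the exponent $\tau_n(i+j)-\tau_n(j)$, the same quantity equals $\alpha^{\tau_n(\tau_n(i+j)-\tau_n(j))}$.

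Equating these two expressions for $1+\alpha^{\tau_n(i+j)-\tau_n(j)}$ and passing from an equality in $\F_{2^n}^*$ to a congruence of exponents modulo $2^n-1$ gives (\ref{tau4}). The only subtle point is to make sure the manipulations are legitimate, i.e.\ that $1+\alpha^{j}\neq 0$, which is automatic since $j \in \bbra{1,2^n-2}$ so $\alpha^{j}\neq 1$; no genuine obstacle arises, and the argument is essentially a one-line identity in $\F_{2^n}$ dressed up via the defining relation of $\tau_n$.
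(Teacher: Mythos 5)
Your proof is correct and is essentially the paper's own argument: both start from the defining relation $1+\alpha^{\ell}=\alpha^{\tau_n(\ell)}$, use characteristic-$2$ arithmetic to identify $\alpha^{\tau_n(j)}+\alpha^{\tau_n(i+j)}$ with $\alpha^{j}(1+\alpha^{i})$, and then divide by $\alpha^{\tau_n(j)}$ to read off the exponent congruence. The only cosmetic difference is that you phrase the manipulation as a ratio while the paper multiplies through by $\alpha^{j}$; the content is identical.
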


\begin{proof}
For $i,j \in \bbra{1,2^n-2}$, $1+\alpha^i =\alpha^{\tau_n(i)}$ if and only if $\alpha^{j}+\alpha^{i+j}=\alpha^{\tau_n(i)+j}$. This is equivalent to $\alpha^{\tau_n(j)}+\alpha^{\tau_n(i+j)}=\alpha^{\tau_n(i)+j}$. 
Thus, $1+\alpha^{\tau_n(i+j)-\tau_n(j)}=\alpha^{\tau_n(i)+j-\tau_n(j)}$. \qed
\end{proof}
To apply Lemma~\ref{lem:tau}, one looks for an $(i,j)$ pair such that the respective Zech's logarithms of $i,j$, and $i+j$ are already known, \ie, the elements $i,j$, and $i+j$ are in the union $U$ of cosets with known Zech's logarithms, but $\tau(i) - \tau(j) \notin U$. In many cases, a given Zech's logarithm is sufficient to deduce all others values. We can write Equation (\ref{tau4}) as 
$\tau_n(\tau_n(i)-\tau_n(j))\equiv \tau_n(i-j) + j -\tau_n(j) \pmod{(2^n-1)}$.

\begin{example}\label{ex:ZechTable}
We reproduce the table in~\cite[Appendix]{Huber90} for $p(x)=x^{10}+x^3+1$ 
without any key element. The computations are done modulo $2^{10}-1$ with $=$ replacing $\equiv$ for brevity. There are $107$ cyclotomic cosets of $2$ modulo $1023$: the trivial coset $D_0$, a coset of cardinality $2$, $6$ cosets, each of cardinality $5$, and $99$ cosets, each of cardinality $10$. The coset $\{341,682\}$ implies $\tau(341)=682$. The cycle of $6$ cosets beginning from $\tau(3)=10$ is 
$3 \in D_3 ~  \lra{\flip} ~ 10 \in D_5 ~ \lra{\In} ~ 1013 \in D_{383} ~\lra{\flip} ~ 1016 \in D_{127} \lra{\In} ~ 7 \in D_7 ~ \lra{\flip} ~ 1020 \in D_{255}~\lra{\In}~ 3 \in D_3$, giving the logarithms of all $60$ elements in $\bigcup D_{k} \mbox{ with } k \in \{3,5,7,127,255,383\}$. 
For the remaining logarithms, one searches for an $(i,j)$ pair with $\tau(i)-\tau(j)$ not in any of previously known cosets but $i-j$ is. The task for the remaining cosets is summarized in Table~\ref{table:cosets}. The rows follow chronological order. 

The approach may fail to yield the complete table. We tested all trinomials $x^n+x^j+1$ with $n \in \{15,17,18,20,22\}$ and $j \leq \floor{n/2}$ since the reciprocals give identical conclusions. When $n=15$, the full table is obtained for $j \in \{1,4,7\}$. When $n=17$, Lemma~\ref{lem:tau} produces the full table for $j \in \{3,5\}$ but fails for $j=6$. It also fails for $n=18,j=7$ but works for $n=20,j=3$ and $n=22,j=1$. In case of failure, the computation in (\ref{tau3}) becomes necessary. \qed
	
\begin{table}[h!]
\caption{Zech's Logarithms for Elements in Remaining Cosets of Example~\ref{ex:ZechTable}.}
\label{table:cosets}
\renewcommand{\arraystretch}{1.1}
\centering
\begin{tabular}{lllc}
\hline
$(i,j)$ & $\tau(\tau(i)-\tau(j))$ & Cosets $D_{k}$s in the induced cycle  & $\abs{\cup_{k} D_k }$ \\
\hline
$(12,5)$ & $\tau(550)=512$ & $ k \in \{77,1,511,19,251,187\}$ & $60$\\
$(12,7)$ & $\tau(43)=523$ & $ k \in \{43,23,125,63,15,245\}$ & $60$\\
$(76,28)$ & $\tau(11)=200$ & $ k \in \{11,25,223,45,189,253\}$ & $60$\\
$(3,1)$ & $\tau(956)=78$ & $ k \in \{239,39,123,439,73,49\}$ & $60$\\
$(3,2)$ & $\tau(879)=948$ & $ k \in \{447,237,75,375,69,9\}$ & $60$\\
			
$(12,2)$ & $\tau(909)=874$ & $ k \in \{111,347,149,35,247,57\}$ & $60$\\
$(12,10)$ & $\tau(37)=161$ & $ k \in \{37,379,31\}$ & $30$\\
$(37,31)$ & $\tau(426)=316$ & $ k \in \{213,79,59,55,121,171\}$ & $60$\\
$(37,6)$ & $\tau(141)=744$ & $ k \in \{93,105,183\}$ & $30$\\
$(77,43)$ & $\tau(501)=142$ & $ k \in \{351,71,119,235,83,21\}$ & $60$\\
			
$(77,34)$ & $\tau(402)=958$ & $ k \in \{147,479,17,221,89,219\}$ & $60$\\
$(68,12)$ & $\tau(181)=971$ & $ k \in \{181,191,13,157,91,173\}$ & $60$\\
$(749,255)$ & $\tau(29)=566$ & $ k \in \{29,109,151,207,51,95\}$ & $60$\\
$(702,136)$ & $\tau(343)=746$ & $ k \in \{343,85,155\}$ & $30$\\
$(434,109)$ & $\tau(27)=206$ & $ k \in \{27,103,115,205,179,159\}$ & $60$\\
			
$(349,333)$ & $\tau(33)=660$ & $ k \in \{33,165,363,99,231,495\}$ & $30$\\
$(785,151)$ & $\tau(87)=619$ & $ k \in \{87,215,117,367,101,41\}$ & $60$\\
$(274,51)$ & $\tau(107)=376$ & $ k \in \{107,47,175,61,167,53\}$ & $60$\\
\hline
\end{tabular}
\end{table}
\end{example}

\section{Zech's Logarithms and Conjugate Pairs}\label{sec3}

A state $\v:=(v_0,v_{1},\ldots,v_{n-1})$ and its {\it conjugate} $\overline{\v}:=(v_0 + 1,v_{1},\ldots,v_{n-1})$ form a {\it conjugate pair}. Cycles $C_1$ and $C_2$ in $\Omega(h)$ are {\it adjacent} if they are disjoint and there exists $\v$ in $C_1$ whose conjugate $\overline{\v}$ is in $C_2$. Adjacent cycles merge into a single cycle by interchanging the successors of $\v$ and $\overline{\v}$. The feedback function of the resulting cycle is
\begin{equation}\label{eq:CJ_feed}
\widetilde{h}(x_0,\ldots,x_{n-1})=h(x_0,\ldots,x_{n-1})+\prod_{i=1}^{n-1}(x_i+v_i+1).
\end{equation}
Continuing this step, all cycles in $\Omega(h)$ join into a de Bruijn sequence. The feedback functions of the resulting de Bruijn sequences are completely determined once the corresponding conjugate pairs are found. 

\subsection{Basic Notions and Known Results}

\begin{definition}(From~\cite{HM96}\label{def:adjgraph}) The {\it adjacency graph} $\mG$ of an FSR with feedback function $h$ is an undirected multigraph whose vertices correspond to the cycles in $\Omega(h)$. There exists an edge between two vertices if and only if they are adjacent. A conjugate pair labels every edge. The number of edges between any pair of cycles is the number of conjugate pairs that they share.
\end{definition}
Clearly $\mG$ has no loops. There is a bijection between the spanning trees of $\mG$ and the de Bruijn sequences constructed by the cycle joining method (see. \eg, ~\cite{HM96} and~\cite{HH96}). The following well-known counting formula is a variant of the BEST (de {\bf B}ruijn, {\bf E}hrenfest, {\bf S}mith, and {\bf T}utte) Theorem from~\cite[Section~7]{AEB87}. Recall that the {\it cofactor} of entry $m_{i,j}$ in $\mM$ is $(-1)^{i+j}$ times the determinant of the matrix obtained by deleting the $i$-th row and $j$-th column of $\mM$.

\begin{theorem}(BEST)\label{BEST} Let $V_{\mG}:=\{V_1,V_2,\ldots,V_k\}$ be the vertex set of the adjacency graph $\mG$ of an FSR. Let $\mM=(m_{i,j})$ be the $k \times k$ matrix derived from $\mG$ in which $m_{i,i}$ is the number of edges incident to vertex $V_i$ and $m_{i,j}$ is the negative of the number of edges between vertices $V_i$ and $V_j$ for $i \neq j$. Then the number of the spanning trees of $\mG$ is the cofactor of any entry of $\mM$.
\end{theorem}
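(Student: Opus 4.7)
The plan is to recognize $\mM$ as the Laplacian matrix of the undirected multigraph $\mG$, so that the result becomes a direct application of Kirchhoff's Matrix-Tree Theorem. Writing $\mM = D - A$ where $D$ is the diagonal degree matrix and $A$ is the multi-adjacency matrix, every row of $\mM$ sums to zero by construction, which forces $\mM$ to be singular and (via a standard adjugate-symmetry argument combined with the symmetry of $\mM$) all its $(k-1) \times (k-1)$ cofactors to be equal. This already legitimizes the phrasing ``the cofactor of any entry'' in the statement.

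The standard route from here is via the signed incidence matrix. I would fix an arbitrary orientation on every edge of $\mG$ and form the $k \times m$ matrix $B$, where $m$ is the total number of edges; each column of $B$ has a $+1$ at its tail vertex, a $-1$ at its head vertex, and zeros elsewhere. A direct block computation then recovers $\mM = B B^{\top}$: the diagonal entry $(B B^\top)_{ii}$ sums $1$ over every edge incident to $V_i$ giving $\deg(V_i)$, while $(B B^\top)_{ij}$ for $i \neq j$ picks up a $-1$ from each edge between $V_i$ and $V_j$, regardless of its orientation.

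Next, I would delete the row of $B$ corresponding to any chosen vertex to obtain a matrix $B'$, so that the cofactor of $\mM$ of interest equals $\det\bigl(B'(B')^{\top}\bigr)$. Applying the Cauchy--Binet formula expands this as $\sum_S (\det B'_S)^2$, where $S$ ranges over $(k-1)$-subsets of edges and $B'_S$ is the corresponding column submatrix. The crux is then the classical fact that $\det B'_S \in \{-1,0,+1\}$, and $\det B'_S \neq 0$ precisely when the edges indexed by $S$ form a spanning tree of $\mG$. I would verify this dichotomy by induction on $k$: a leaf of the candidate tree yields a row of $B'_S$ with a single nonzero entry, allowing Laplace expansion to reduce the order; if $S$ instead contains a cycle, the alternating signed sum of the columns along that cycle (using the fixed orientations) produces an explicit linear dependence, forcing $\det B'_S = 0$.

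The main delicacy is the \emph{multigraph} aspect: parallel edges must be tracked individually. They appear as distinct columns of $B$ with identical supports, so the Cauchy--Binet sum naturally sees each spanning tree as a specific subset of edges rather than merely as a choice of adjacent vertex pairs. This is exactly the multiplicity demanded by the statement, in which parallel edges labelled by distinct conjugate pairs are counted separately. Apart from that bookkeeping, the argument is routine, and the only non-mechanical step is the column-dependence construction for cycles in the multi-edge setting, which must be written carefully with respect to the chosen orientations.
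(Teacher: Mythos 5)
Your proposal is a correct and complete proof, but note that the paper does not prove this statement at all: Theorem~\ref{BEST} is quoted as a known counting formula (a variant of the BEST theorem, with a citation to van Aardenne-Ehrenfest and de Bruijn) and is used as a black box to count spanning trees of the adjacency graph. What you have written is the standard Kirchhoff Matrix--Tree argument: identify $\mM$ as the graph Laplacian $D-A$, factor it as $BB^{\top}$ through an arbitrarily oriented signed incidence matrix, observe that the zero row sums together with symmetry force all cofactors of the singular matrix $\mM$ to coincide (equal to $\det\bigl(B'(B')^{\top}\bigr)$ after deleting one row), and then expand by Cauchy--Binet, using the dichotomy $\det B'_S\in\{-1,0,+1\}$ with nonvanishing exactly on edge sets forming spanning trees. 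Your attention to the multigraph bookkeeping is the right emphasis for this paper, since parallel edges labelled by distinct conjugate pairs must be counted as distinct tree edges, and two parallel edges in $S$ already form a $2$-cycle killing $\det B'_S$; the only minor point you might make explicit is the disconnected case, where the rank of $B'$ drops below $k-1$ so every cofactor vanishes, consistent with the absence of spanning trees. In short, your argument supplies a self-contained proof of a result the authors only cite, and it is the standard one.
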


To generate de Bruijn sequences of a large order, efficient identification of conjugate pairs is crucial. A de Bruijn sequence generator has been recently proposed in~\cite{Chang2019}. Its basic software implementation \cite{EF18} demonstrates a decent performance up to order $n \approx 20$ when the characteristic polynomials are products of distinct irreducible polynomials. As the order grows the program soon runs into time complexity issues. Another recent contribution can be found in~\cite{Dong16}. The work shows how to generate de Bruijn sequences of large orders, say $128$, but without supplying time or space complexity analysis. The above examples indicates that dealing with large order demands much faster algorithmic tools.

Let $\mu(n)$ be the M{\"o}bius function. There are $\frac{1}{n} \sum_{d \mid n} \mu(d) 2^{\frac{n}{d}}$ binary irreducible polynomials of degree $n$. All irreducible polynomials of (prime) degree $n$ are primitive if and only if $2^n-1$ is prime. Such an $n$ is called a {\it Mersenne exponent}. Although not known to be finite, Mersenne exponents are sparse~\cite{OEIS43}. Thus, for most $n$, there are many more irreducible than primitive polynomials. Each such polynomial yields a number of de Bruijn sequences if one can efficiently identify the conjugate pairs. We show how to accomplish this task.

A primitive polynomial $p(x) \in \F_2[x]$ of degree $n$ having a root $\alpha$ can be identified in several ways, (see, \eg, ~\cite[Section 4.4]{Arn10}). Many computer algebra systems have routines that output primitive polynomial(s) of a specified degree. Combining a decimation technique and the Berlekamp-Massey algorithm on input $p(x)$ and a suitable divisor $t$ of $2^n-1$ yields the associated irreducible polynomial $f(x)$. It has degree $n$, order $e$, and a root $\beta=\alpha^t$ with $e \cdot t = 2^n-1$. Notice that $n$ is the least integer satisfying $2^n \equiv 1 \Mod{e}$. 

The $t$ distinct nonzero cycles in $\Omega(f(x))$ can be ordered in the language of cyclotomic classes and numbers using a method from~\cite{HH96}. The {\it cyclotomic classes} $C_i\subseteq\F_{2^n}^{*}$, for $0 \leq i <t$, are
\begin{equation}\label{eq:cyclas}
C_i=\{\alpha^{i+ s \cdot t}~|~0\leq s <e\}=\{\alpha^i\beta^s~|~0\leq s<e\}=\alpha^i C_0.
\end{equation}
The {\it cyclotomic numbers} $(i,j)_{t}$, for $0\leq i,j <t$, are
\begin{equation}\label{eq:cycnum}
(i,j)_{t} =\left|\{\xi~|~\xi\in C_i, \xi+1\in C_j\}\right|.
\end{equation}
Using the basis $\{1,\beta,\ldots,\beta^{n-1}\}$ for $\F_{2^n}$ we write $\displaystyle{\alpha^j=\sum_{i=0}^{n-1}a_{j,i} \beta^i}$
with $a_{j,i} \in \F_2$ for $j \in \bbra{0,2^n-2}$. In vector form, the expression becomes
\begin{equation}\label{eq:vecform}
\alpha^j=(a_{j,0},a_{j,1},\ldots,a_{j,n-1}).
\end{equation}
Define the mapping $\varphi:\F_{2^n}\rightarrow \F_2^{n}$ by
\begin{equation}\label{eq:varphi}
\varphi(0)=\0,\ \ \varphi(\alpha^{j})=(a_{j,0},a_{j+t,0},\ldots,a_{j+(n-1)t,0}),
\end{equation}
where the subscripts are reduced modulo $2^n-1$.
By the recursive relation determined by (\ref{eq:vecform}), $\varphi$ is a bijection. Let
\begin{equation}\label{eq:corres}
\u_i := (a_{i,0},a_{i+t,0},\ldots,a_{i+(e-1)t,0}).
\end{equation}
It is now straightforward to verify that 
\begin{equation}\label{equ:g}
\Omega(f(x))=[\0]\cup[\u_0]\cup[\u_1]\cup\ldots\cup[\u_{t-1}]
\end{equation}
with $\varphi(\alpha^{i})$ as the initial state of $\u_i$ for $i \in \bbra{0,t-1}$. 
In particular, the initial state of $\u_0$ is $(1,\0) \in \F_2^n$. Note that $\varphi$ induces a correspondence between $C_i$ and $[\u_i]$ (see~\cite[Theorem 3]{HH96}). 
In other words, $\u_i$ and the sequence of states of $\u_i$,
namely $((\u_i)_0,(\u_i)_1,\ldots,(\u_i)_{e-1})$, where
\[
(\u_i)_j=(a_{i+jt,0},a_{i+(j+1)t,0},\ldots,a_{i+(j+n-1)t,0})
=\varphi(\alpha^i\beta^j)
\]
for $j \in \bbra{0,e-1}$,
are equivalent. The state $\varphi(\alpha^i\beta^j)$ corresponds to the
element $\alpha^i\beta^j \in C_i$. Hence, $\u_i \longleftrightarrow C_i$. 
This provides a convenient method to find the exact position of any state 
$\v=(v_0,v_1,\ldots,v_{n-1})\in \F_2^n$ in some cycle in $\Omega(f(x))$. 

\begin{proposition}\cite[Corollary 2]{HH96}\label{prop:from HH}	
The states $\varphi(\theta)$ and $\varphi(\theta+1)$ form a conjugate pair for all $\theta \in \F_{2^n}$.
\end{proposition}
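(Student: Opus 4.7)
The plan is to unpack the defining formula for $\varphi$ and reduce the claim to the elementary observation that, in the basis $\{1,\beta,\ldots,\beta^{n-1}\}$, adding $\beta^k$ for $k \geq 1$ leaves the coefficient of $\beta^0$ unchanged, while adding $\beta^0 = 1$ flips it. Recall that the $k$-th coordinate of $\varphi(\alpha^j)$ is $a_{j+kt,0}$, the coefficient of $\beta^0$ in the expansion of $\alpha^{j+kt} = \alpha^j \cdot \beta^k$, since $\alpha^t = \beta$. Two states form a conjugate pair precisely when they differ only in the $0$-th coordinate, so the target is to show this single-coordinate difference for the pair $\varphi(\theta),\varphi(\theta+1)$.

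First I would dispose of the boundary cases. When $\theta = 0$, observe that $\varphi(0) = \0$ by definition and $\varphi(1) = \varphi(\alpha^0)$ has $k$-th coordinate equal to the $\beta^0$-coefficient of $\beta^k$, which is $\delta_{k,0}$. Thus $\varphi(1) = (1,\0)$, giving the conjugate of $\0$. The case $\theta = 1$ is identical by symmetry.

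For the generic case $\theta = \alpha^j$ with $\theta + 1 \neq 0$, write $\theta + 1 = \alpha^{\tau(j)}$ using the Zech's logarithm. Then for each $k \in \bbra{0,n-1}$,
\begin{equation*}
\alpha^{\tau(j)+kt} = (\alpha^j + 1)\beta^k = \alpha^{j+kt} + \beta^k.
\end{equation*}
Taking $\beta^0$-coefficients on both sides, and noting that $\beta^k$ contributes to the $\beta^0$-coefficient only when $k=0$, yields
\begin{equation*}
a_{\tau(j)+kt,0} = a_{j+kt,0} + \delta_{k,0}.
\end{equation*}
This is exactly the statement that $\varphi(\alpha^{\tau(j)})$ and $\varphi(\alpha^j)$ agree in coordinates $1,\ldots,n-1$ and differ in coordinate $0$, so they form a conjugate pair.

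There is no genuine obstacle here; the only thing to be careful about is the interpretation of the $k$-th coordinate of $\varphi$ as the $\beta^0$-coefficient of $\alpha^j\beta^k$, which is what makes the additive relation $\theta+1$ transparent in the $\F_2$-basis $\{1,\beta,\ldots,\beta^{n-1}\}$. The proof is essentially a one-line calculation once the correspondence \eqref{eq:varphi} is read in this way.
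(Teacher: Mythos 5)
Your proof is correct and follows essentially the same route the paper takes: the proposition is quoted from \cite{HH96}, but the identical argument appears inside the proof of Theorem~\ref{thm:equivalence}, where the claim is reduced to the additivity $\varphi(\eta)+\varphi(\gamma)=\varphi(\eta+\gamma)$ together with $\varphi(1)=(1,\0)$. Your coordinate-wise computation of the $\beta^0$-coefficients of $\alpha^{j+kt}+\beta^k$ is just an explicit verification of that additivity in the one case needed, so the two arguments coincide in substance.
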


\subsection{Efficient Exact Determination of Conjugate Pairs}

We extend Proposition~\ref{prop:from HH} by explicitly determining the respective exact positions of the states in their corresponding cycles. 
This is far from trivial. First, one needs to fix a {\it standard} representation of the cycles and then uses algebraic tools to find the very location of $\varphi(\theta)$ in $C_1$ and $\varphi(\theta+1)$ in $C_2$. 

Let $\v=\varphi(\alpha^i\beta^j)=
\varphi(\alpha^{i+tj})$ for some $i \in \bbra{0,t-1}$. Then $\v$ must be the $j$-th state of $\u_i$. Let $k=i+tj$ and suppose that 
$\alpha^k=a_0+a_1\beta+\ldots+a_{n-1}\beta^{n-1}$. We have $a_0=v_0$ from the definition of $\varphi(\alpha^k)$. 
Note that if $f(x)$ in (\ref{equ:cp}) is irreducible, then $c_0=1$. Since $\beta=\alpha^t$,
\begin{multline*}
\alpha^{k+t} =\sum_{\ell=0}^{n-1}a_{\ell}\beta^{\ell+1}= 
\sum_{\ell=0}^{n-2}a_{\ell} \beta^{\ell+1}+\sum_{\ell=0}^{n-1}a_{n-1} c_{\ell} \beta^{\ell} = \notag\\
a_{n-1}+(a_0+a_{n-1}c_1)\beta+\ldots+(a_{n-2}+a_{n-1}c_{n-1})\beta^{n-1}.
\end{multline*}
Continuing inductively, it becomes clear that $v_j$ is the constant term in the linear combination of $\alpha^{k+tj}$ in the $\beta$ basis. Hence,
\[
\begin{cases}
a_0    & =v_0 \\
a_{n-1}& =v_1\\
a_{n-2}& =v_1 c_{n-1}+v_2\\
a_{n-3}& =v_1 c_{n-2}+v_2 c_{n-1}+v_3\\
\vdots & \vdots \\
a_1    & =v_1 c_2+\ldots+v_{n-2} c_{n-1}+v_{n-1}
\end{cases}.
\]
Once $a_0+a_1\beta+\ldots+a_{n-1}\beta^{n-1}$ and the Zech's logarithms relative to $\alpha$ are known, one gets $\alpha^k$ and, thus, 
$\v$'s position.

How can one efficiently generate the cycles in $\Omega(f(x))$? Directly using the above definition is not practical since it requires 
costly computations over $\F_{2^n}$. Simply generating them by the LFSR with characteristic polynomial $f(x)$ may fail to 
guarantee that their respective initial states are $\varphi(\alpha^i)$ for $i \in \bbra{0,t-1}$. We show that decimation is the right tool.

Equation (\ref{eq:vecform}) ensures that $\m=(a_{0,0},a_{1,0},\ldots,a_{2^n-2,0})$ is an $m$-sequence 
with characteristic polynomial $p(x)$~\cite[Chapter 5]{GG05}.
The {\it trace function} $\Tr$ maps $\delta \in \F_{2^n}$ to $\sum_{i=0}^{n-1} \delta^{2^i} \in \F_2$. Recall, \eg, from~\cite[Section 4.6]{GG05} that the entries in $\m$ are $a_{i,0}=\Tr(\gamma \alpha^i): 0 \neq \gamma \in \F_{2^n}$ for $i \in \bbra{0,2^n-2}$.
From $\m$, construct $t$ distinct $t$-decimation sequences of period $e$:
\[
\u_0=\m^{(t)},\u_1=(L\m)^{(t)},\ldots,\u_{t-1}=(L^{t-1}\m)^{(t)}.
\]
The resulting sequences have $(\u_k)_j= \Tr(\gamma\alpha^{k+ t\cdot j})$ for $k \in \bbra{0,t-1}$ and $j \in \bbra{0,e-1}$. 
Each $[\u_i]$ is a cycle in $\Omega(f(x))$ since $\beta=\alpha^t$. 
We need to find an initial state $\v$ of $\m$ such that the initial state of $\u_0$ is $(1,\0) \in \F_2^n$.

Let $A_{p}$ be the associate matrix of $p(x)$. Then the respective first elements of $\v A_{p}^{(i \cdot t)}$ for $t \in \bbra{0,n-1}$ must be $1,0,\ldots,0$. A system of equations to derive $\v$ can be constructed. Let $\kappa$ be the number of $1$s in the binary representation 
of $i \cdot t$. Computing $A_{p}^{(i \cdot t)}$ is efficient using the square-and-multiply method, taking at most 
$\log_2 \floor{i \cdot t}$ squarings and $\kappa$ multiplications.

We use $\v$ and $p(x)$ to derive the first $n \cdot t$ entries of $\m$. The respective initial states $\varphi(\alpha^0),\ldots,\varphi(\alpha^{t-1})$ of $\u_0,\ldots,\u_{t-1}$ in $\Omega(f(x))$ immediately follow by decimation. Thus, one gets $\varphi(\alpha^j)$ for any $j$.
This allows us to quickly find the desired initial state of any cycle, even for large $n$.
Given the state $(\u_i)_j=\varphi(\alpha^i\beta^j)$, we have  $T^k[(\u_i)_j]=\varphi(\alpha^i\beta^{j+k})$. At this point, given an irreducible polynomial $f(x)$ with root $\beta$ and order $e=\frac{2^n-1}{t}$, 
we need a primitive polynomial $p(x)$ with the same degree as $f(x)$ and a root $\alpha$ satisfying $\beta=\alpha^t$. In general, such a $p(x)$ is not unique~\cite[Subsection 3.2]{Chang2019}. Here we provide a method to find one.

For $k \in \bbra{1,\Lambda_{n}}$,
let $p_k(x)$ be a primitive polynomial of degree
$n$ that generates the $m$-sequence $\m_k$. The set of all
shift inequivalent $m$-sequences with period $2^n-1$ is
$\{\m_1,\m_2,\ldots,\m_{\Lambda_{n}}\}$. The elements are the $d_j$-decimation sequences of any $\m_k$ for all $d_j$ satisfying $\gcd(d_j,2^n-1)=1$. We derive $\m_{k}^{(t)}$ of period $e$ and check if it shares a common string of $2n$ consecutive elements with a sequence whose characteristic polynomial is $f(x)$. If yes, then we associate $p_k(x)$ with $f(x)$. Testing all $k$s guarantees a match between $f(x)$ and some $p_k(x)$ without 
costly operations over $\F_{2^n}$.

As $n$ or $t$ grows, finding one such $p(x)$ becomes more computationally involved. To work around this limitation, one starts instead with any primitive polynomial $p(x)$ with a root $\alpha$ and find the corresponding irreducible $f(x)$ having $\beta=\alpha^t$ as a root. There are tools from finite fields (see, \eg,~\cite{LN97})
that can be deployed. We prefer another approach that does not require computations over $\F_{2^n}$.

Any primitive polynomial $p(x)$ generates an $m$-sequence $\m$. By $t$-decimation we get $\m^{(t)}$. We input any $2n$ consecutive bits of $\m^{(t)}$ into the Berlekamp-Massey Algorithm~\cite[Section 6.2.3]{Menezes:1996} to get an irreducible
polynomial $f(x)$ having $\alpha^t$ as a root. There are instances where
$f(x)$ has degree $m \mid n$ with $m < n$. This implies that, for this $t$, there is no irreducible polynomial of degree $n$ that can be associated with $p(x)$. As $k$ traverses $\mathcal{R}_n$, by $k$-decimation and the Berlekamp-Massey algorithm, the process provides all irreducible polynomials with root $\alpha^{k}$. The resulting polynomials form the set of all irreducible polynomials with degree $|C_k|=m \mid n$. For $t \in \mathcal{R}_n$, $t$ is valid to use if and only if $|C_t|=n$.

\begin{proposition}\label{prop:commdiag}
For a valid $t$, let the $m$-sequence $\m_2$, whose characteristic polynomial is $p_2(x)$, be the $d_j$-decimation sequence of the $m$-sequence $\m_1$, which is generated by $p_1(x)$. Let $\Psi_{d_j}$ be the mapping, induced by the decimation, that sends $p_1(x)$ to $p_2(x)$. Let $p_i(x)$ be the respective associated primitive polynomial of the irreducible polynomial $q_i(x)$, both of degree $n$. The mapping $\Gamma_t$ is induced by the mapping $\alpha_i \mapsto \beta_i =\alpha_i^t$. Then we have the commutative diagram 
\[ 
\begin{tikzcd}
p_1(x) \arrow{r}{\Psi_{d_j}} \arrow[swap]{d}{\Gamma_t} & 
p_2(x) \arrow{d}{\Gamma_t} \\%
q_1(x) \arrow{r}{\Psi_{d_j}}& q_2(x)
\end{tikzcd}.
\]	
\end{proposition}

\begin{proof}
Let $\alpha_i$ be a primitive root of $p_i(x)$. Hence, there exists a root $\beta_i$ of $q_i(x)$ such that $\beta_i=\alpha_i^t$. Applying the trace map $\Tr$ shows that the two directions coincide by the commutativity of the exponents in a multiplication. \qed
\end{proof}

We can now characterize the conjugate pairs shared by any two distinct cycles by Zech's logarithms. In fact, determining the respective initial states of $[\u_i] : i \in \bbra{1,t-1}$ is not even necessary since ensuring that $(1,\0)$ is the initial state of $[\u_0]$ is sufficient for implementation. 
\begin{theorem}\label{thm:equivalence}
Let $\alpha$ be a root of a primitive polynomial $p(x)$ of degree $n$ and $\tau()$ be the Zech's logarithm with respect to $\alpha$. Let $f(x)$ be the irreducible polynomial of degree $n$ and order $e=\frac{2^n-1}{t}$ having a root $\beta=\alpha^t$, \ie, $f(x)$ is associated with $p(x)$. 

Let $[\u_i]$ and $[\u_{\ell}]$ be distinct nonzero cycles in $\Omega(f(x))$ constructed above, \ie, $i,\ell \in \bbra{0,t-1}$ with $i \neq \ell$. Let $\v:=T^j\varphi(\alpha^i)=\varphi(\alpha^{i+tj})$ be the $j$-th state of $[\u_i]$ and $\overline{\v}:=T^k \varphi(\alpha^{\ell})=\varphi(\alpha^{\ell+tk})$ be the $k$-th state of $[\u_{\ell}]$. Then $(\v,\overline{\v})$ forms a conjugate pair if and only if $\ell + tk = \tau(i+tj)$.
\end{theorem}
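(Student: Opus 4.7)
The plan is to exploit the bijection $\varphi$ together with Proposition~\ref{prop:from HH} to translate the vector-level conjugacy condition into an equation in the exponents of $\alpha$, which is exactly what Zech's logarithm is designed to encode. First I would unpack the hypotheses: by the construction above $\v = \varphi(\alpha^{i+tj})$ lies in $[\u_i]$ at position $j$, and $\widehat{\v} = \varphi(\alpha^{\ell + tk})$ lies in $[\u_{\ell}]$ at position $k$. Because each state has a unique conjugate (the one obtained by flipping the first bit), the pair $(\v, \widehat{\v})$ is a conjugate pair in the sense of Section~\ref{sec3} if and only if $\widehat{\v}$ equals the unique conjugate of $\v$ in $\F_2^n$.

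Next I would invoke Proposition~\ref{prop:from HH} with $\theta = \alpha^{i+tj}$: it tells us that $\varphi(\alpha^{i+tj})$ and $\varphi(\alpha^{i+tj}+1)$ always form a conjugate pair. Combined with the uniqueness observation, the conjugate of $\v$ is precisely $\varphi(\alpha^{i+tj}+1)$. The Zech's logarithm relative to $\alpha$ is defined by $1 + \alpha^{m} = \alpha^{\tau(m)}$, so the conjugate of $\v$ simplifies to $\varphi(\alpha^{\tau(i+tj)})$. Therefore the conjugate-pair condition becomes the identity
\[
\varphi(\alpha^{\ell + tk}) = \varphi(\alpha^{\tau(i+tj)}).
\]
Since $\varphi$ is a bijection and $\alpha$ has order $2^n-1$, this is equivalent to $\ell + tk \equiv \tau(i+tj) \pmod{2^n-1}$, which closes the loop in both directions.

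The main delicate point, and the one I would address explicitly, is the edge case $\alpha^{i+tj}=1$ (so that $\tau(i+tj)$ would be $-\infty$): this would force $\widehat{\v}=\0$, putting $\widehat{\v}$ in the trivial cycle $[\0]$ rather than in some $[\u_{\ell}]$, contradicting the hypothesis that $[\u_i]$ and $[\u_{\ell}]$ are distinct \emph{nonzero} cycles. Hence $i + tj \in \bbra{1, 2^n-2}$ and $\tau(i+tj)$ is genuinely an integer in $\bbra{0, 2^n-2}$, so the matching of positions $\ell + tk$ with $\tau(i+tj)$ modulo $2^n-1$ is well-posed. I do not anticipate serious algebraic obstacles beyond this bookkeeping, because the heavy lifting has already been done in establishing the bijection $\varphi$, its compatibility with the cyclotomic decomposition in Equation~(\ref{equ:g}), and Proposition~\ref{prop:from HH}.
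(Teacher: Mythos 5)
Your proposal is correct and follows essentially the same route as the paper: both reduce conjugacy of $\v$ and $\widehat{\v}$ to the identity $\widehat{\varphi(\alpha^{m})}=\varphi(1+\alpha^{m})=\varphi(\alpha^{\tau(m)})$ (you via Proposition~\ref{prop:from HH} plus uniqueness of conjugates, the paper via the additivity $\varphi(\eta)+\varphi(\gamma)=\varphi(\eta+\gamma)$ and $\varphi(1)=(1,0,\ldots,0)$), then match exponents using the bijectivity of $\varphi$ and the decomposition $\tau(i+tj)=\ell+tk$. Your explicit treatment of the degenerate case $\alpha^{i+tj}=1$ is a harmless addition that the paper handles implicitly by noting $\varphi(1)$ and $\0$ are conjugate.
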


\begin{proof}
Let $\eta$ and $\gamma$ be elements of $\F_{2^n}$. 
Then $\varphi(\eta)$ is a state of $\u_i$ if and only if 
$\eta=\alpha^j$ and $j \in C_i$. It is therefore clear that $\varphi(\eta)+\varphi(\gamma)=\varphi(\eta+\gamma)$. 
Observe that $\varphi(\alpha^0=1)$ and $\0$ are conjugate. 
The conjugate of $\varphi(\alpha^j)$ with $j \in \bbra{1,2^n-2}$ is
\[
\overline{\varphi(\alpha^j)}=\varphi(1) + \varphi(\alpha^j)
=\varphi(1+\alpha^j)=\varphi(\alpha^{\tau(j)}) \mbox{ where } 
\varphi(1)=(1,0,\ldots,0).
\]
The conjugate of any state $\varphi(\alpha^j)$ in cycle $[\u_{j\Mod{t}}]$ must then be $\varphi(\alpha^{\tau(j)})$ in cycle $[\u_{\tau(j)\Mod{t}}]$.
In other words, the conjugate of the $j$-th state of cycle $[\u_i]$, which is 
$T^j\varphi(\alpha^i)=\varphi(\alpha^i\beta^j)=\varphi(\alpha^{i+tj})$, must be $\varphi(\alpha^{\tau(i+tj)})$. 
Writing $\tau(i+tj)=kt+ \ell$ with 
$k \in \bbra{0,e-1}$ and $\ell \in \bbra{0,t-1}$, 
$\varphi(\alpha^{\tau(i+tj)})=T^{k}\varphi(\alpha^{\ell})$ 
belongs to $[\u_{\ell}]$. 

Thus, knowing the Zech's logarithms relative to $\alpha$ enables us to 
easily determine all conjugate pairs between two arbitrary cycles in $\Omega(f(x))$. 
By the definition of cyclotomic numbers, $[\u_i]$ and $[\u_j]$ share $(i,j)_t$ conjugate pairs. 

Conversely, knowing all of the conjugate pairs allows us to derive the Zech's logarithms relative to $\alpha$.
Let a conjugate pair $(\v,\overline{\v})$ with $\v=T^j\varphi(\alpha^i)=\varphi(\alpha^{i+tj})$
and $\overline{\v}=T^k\varphi(\alpha^{\ell})=\varphi(\alpha^{\ell+tk})$ be given. Then $\tau(i+tj)=\ell+tk$ since 
$\overline{\v}$ must be $\varphi(\alpha^{\tau(i+tj)})$. If all of the conjugate pairs are known, a complete Zech's 
logarithm table, relative to $\alpha$, follows. \qed
\end{proof}

\begin{example}\label{example1}
Given $f(x)=x^4+x^3+x^2+x+1$, which is irreducible, of order $5$ with $\beta$ as a root, 
choose $p(x)=x^4+x+1$ with a root $\alpha$ satisfying $\alpha^3=\beta$ as the associated 
primitive polynomial. Let $\m$ be the corresponding $m$-sequence with initial state $(1,0,0,0)$. By $3$-decimating one derives $\Omega(f(x))=[\0]\cup[\u_0]\cup[\u_1]\cup[\u_2]$ with $\u_0=(1,0,0,0,1)$, $\u_1=(0,1,1,1,1)$, and $\u_2=(0,0,1,0,1)$. The nonzero $4$-stage states are

\begin{center}
\begin{tabular}{l | l}
$\varphi(\alpha^0 =\alpha^0\beta^0)=(1,0,0,0)=(\u_0)_0$ &  
$\varphi(\alpha^8 =\alpha^2\beta^2)=(1,0,1,0)=(\u_2)_2$\\

$\varphi(\alpha^1 = \alpha^1\beta^0)=(0,1,1,1)=(\u_1)_0$ &
$\varphi(\alpha^9 = \alpha^0\beta^3)=(0,1,1,0)=(\u_0)_3$ \\

$\varphi(\alpha^2 = \alpha^2\beta^0)=(0,0,1,0)=(\u_2)_0$ &
$\varphi(\alpha^{10} = \alpha^1\beta^3)=(1,1,0,1)=(\u_1)_3$\\

$\varphi(\alpha^3 = \alpha^0\beta^1)=(0,0,0,1)=(\u_0)_1$ &
$\varphi(\alpha^{11} = \alpha^2\beta^3)=(0,1,0,0)=(\u_2)_3$\\

$\varphi(\alpha^4 = \alpha^1\beta^1)=(1,1,1,1)=(\u_1)_1$ &
$\varphi(\alpha^{12} = \alpha^0\beta^4)=(1,1,0,0)=(\u_0)_4$\\

$\varphi(\alpha^5 = \alpha^2\beta^1)=(0,1,0,1)=(\u_2)_1$ &
$\varphi(\alpha^{13} = \alpha^1\beta^4)=(1,0,1,1)=(\u_1)_4$\\

$\varphi(\alpha^6 = \alpha^0\beta^2)=(0,0,1,1)=(\u_0)_2$ &
$\varphi(\alpha^{14} = \alpha^2\beta^4)=(1,0,0,1)=(\u_2)_4$\\

$\varphi(\alpha^7 = \alpha^1\beta^2)=(1,1,1,0)=(\u_1)_2$ & \\
\end{tabular}.
\end{center}
The respective Zech's logarithms are 
\[
\{\tau(i):  i \in \bbra{1,14}\}=\{4, 8, 14, 1, 10, 13, 9, 2,7,5,12,11,6,3\}.
\]
All conjugate pairs between any two nonzero cycles can be determined 
by Theorem~\ref{thm:equivalence}. Knowing $\tau(3)=14$, for example, one concludes 
that $[\u_0]$ and $[\u_2]$ share the pair $\varphi(\alpha^3)=(0,0,0,1)$ and $\varphi(\alpha^{14})=(1,0,0,1)$. Conversely, knowing a conjugate pair is sufficient to determine the corresponding Zech's logarithm. 
Since $(0,0,1,1)=\varphi(\alpha^6)$ and $(1,0,1,1)=\varphi(\alpha^{13})$ form a conjugate pair between $[\u_0]$ and $[\u_1]$, for instance, one gets $\tau(6)=13$.

The feedback function derived from $f(x)$ is $h=x_0+x_1+x_2+x_3$. By Equation (\ref{eq:CJ_feed}), the merged cycle joining $[\u_0]$ and $[\u_2]$ based on the conjugate pair $(0,0,0,1)$ and $(1,0,0,1)$ has the feedback function
$x_1 \cdot x_2 \cdot x_3 + x_1 \cdot x_3 + x_2 \cdot x_3 + x_3$. Appending $[\u_1]$ to the resulting cycle using the conjugate pair $(0,0,1,1)$ and $(1,0,1,1)$ shared by $[\u_0]$ and $[\u_1]$, results in the feedback function $\widetilde{h}=x_0 + x_1 + x_2 + x_1 \cdot x_3$ that produces a sequence of period $15$. Appending a $0$ to the string of $3$ zeroes in the sequence gives us the de Bruijn sequence $(0000~1010~0111~1011)$. \qed
\end{example}

\begin{remark}\label{rem:3}
If $f(x)$ in Theorem~\ref{thm:equivalence} is primitive, then the output is an $m$-sequence $\m=(m_0,m_1,\ldots,m_{2^n-2})$. Let $\m_0:=(m_0,\ldots,m_{n-1})=(1,\0)$. 
The $i$-th state $\m_i$ and the $\tau(i)$-th state $\m_{\tau(i)}$ form a conjugate pair. To compute $\tau(i)$ for $i \in \bbra{1,2^n-2}$ it suffices to determine the position of the state $\m_{(\tau(i))}=\m_i+\m_0$ by searching. This fact can be used to find the Zech's logarithms when $n$ is not very large. \qed
\end{remark}

\section{Spanning Trees}\label{sec:spantree}

\subsection{Constructing Enough Number of Spanning Trees}

When $n$ or a valid $t$ is large, building $\mG$ completely is possible but very often unnecessary and consumes too much resources. It is desirable to find just enough Zech's logarithms to identify a required number of spanning trees. Since there is a unique pair that joins $[\0]$ and $[\u_0]$ into one cycle, our focus is on the set of nonzero cycles $\{[\u_i] : i \in \bbra{0,t-1}\}$. 

Let $j=a_1 \cdot t +a_2$. If $\tau_n(j)= b_1 \cdot t +b_2$ with $a_2,b_2 \in \bbra{0,t-1}$, 
then $[\u_{a_2}]$ and $[\u_{b_2}]$ are adjacent. They are joined into one cycle 
by the conjugate pair
\begin{equation}\label{eq:abpair}
\v=\varphi(\alpha^j)=T^{a_1}\varphi(\alpha^{a_2}) \mbox{ and } \overline{\v}=\varphi(\alpha^{\tau_n(j)})=
T^{b_1}\varphi(\alpha^{b_2}),
\end{equation}
with $\varphi(\alpha^{a_2})$ the initial state of $\u_{a_2}$ and $\varphi(\alpha^{b_2})$ that of $\u_{b_2}$. 
We continue the process by identifying some conjugate pair(s) between enough pairs of adjacent cycles until all of the cycles in $\Omega(f(x))$ can be joined. 

Let $\tau(j)$ for some $j \in \bbra{1,2^n-2}$ be known. This induces a mapping from $D_j$ onto 
$D_{\tau(j)}$ with $n_j:=|D_j|=|D_{\tau(j)}|$. If $j \not \equiv \tau(j) \Mod{t}$, then, 
for $i \in \bbra{0,n_j-1}$, states $\varphi(\alpha^{2^ij})$ and $\varphi(\alpha^{2^i\tau(j)})$ belong to 
distinct cycles. These states join their corresponding cycles into one. Let $m_j$ be the least positive integer such that $(2^{m_j}-1) j \equiv 0 \Mod{t}$. 
Observe that $\varphi(\alpha^{j})$ and $\varphi(\alpha^{j \cdot 2^{m_j}})$
are states of the same cycle and $m_j \mid n_j$. Hence, given cosets $D_j$ 
and $D_{\tau(j)}$, one derives $\frac{n_j}{m_j}$ 
distinct conjugate pairs between each of the $m_j$ distinct pairs of cycles.

The Zech's logarithms supply the exact positions of the conjugate pair(s) in the relevant 
cycles. Once enough conjugate pairs to construct a spanning tree are identified, 
the precise positions to apply the cycle joining method, \ie, to exchange successors, appear. 
Thus, with $(1,\0)$ as the initial state of $\u_0$, we just need to keep 
track of the precise positions, in terms of the operator $T$ and the power of $\alpha$, 
governed by the $(j,\tau_n(j))$ pair. The actual construction of the de Bruijn sequences 
no longer requires storing the initial states of the $t$ nonzero sequences 
in $\Omega(f(x))$. 

\begin{example}\label{ex:300}
Consider $p(x)=x^{300}+x^7+1$ and let $\alpha$ be a root of $p(x)$, implying $\tau(7)=300$. 
Choosing $t =31$, the Berlekamp-Massey algorithm outputs 
\[
f(x)= x^{300}+x^{194}+x^{176}+x^{158}+x^{97}+x^{88}+x^{79}+ 	  x^{52}+x^{43}+x^{25}+x^{16}+x^7+1.
\]
Hence, $\Omega(f(x))=[\0]~\cup~\bigcup_{i=0}^{30} [\u_i]$. 
Let $(1,\0)\in\F_2^{300}$ be the initial state of $\u_0$. 
Knowing a specific $(i,\tau(i))$ pair gives us $\{(j,\tau(j)): j \in D_{i}\}$. 
Note that $|D_{i}|=300$ for all $i$.

Since $\tau(7) \equiv 21 \Mod{31}$, there are $5$ distinct pairs of cycles, each sharing $60$ conjugate pairs. 
The $(a_2,b_2)$ pairs $(7,21), (14,11), (28,22), (25,13), (19,26)$ are the indices of the cycles. 
One of the $60$ conjugate pairs between $[\u_{7}]$ and $[\u_{21}]$ is 
$(\varphi(\alpha^{7}), T^{9} \varphi(\alpha^{21}))$ since $\floor{\frac{7}{31}}=0$ and $\floor{\frac{300}{31}}=9$. 
Computing $a_1$ and $b_1$ are easy given the relevant logarithms, so we omit them from the rest of this example. 

Since $\tau(1) \equiv \tau(3) \equiv 0 \Mod{31}$, $[\u_0]$ shares $60$ conjugate pairs each with $[\u_j]$ for $j \in \{1,2,4,8,16\} \cup \{3,6,12,24,17\}$. Similarly, each adjacent cycles in the list

\begin{center}
\begin{tabular}{cl }
$(i, \tau(i) \pmod{31})$ & Indices of Adjacent Cycles \\
\hline
$(5,3)$ & $(3,5),(6,10),(12,20),(24,9),(17,18)$ \\

$(15,22)$ & $(21,27),(11,23),(22,15),(13,30),(26,29)$\\

$(35,7)$ & $(4,7),(8,14),(16,28),(1,25),(2,19)$\\
\end{tabular}
\end{center}
shares $60$ conjugate pairs. We order the cycles as $[\0],[\u_0],[\u_1],\ldots,[\u_{30}]$ and build an adjacency subgraph $\widetilde{\mG}$ from the  
computational results. Applying Theorem~\ref{BEST} with 
$\mG$ replaced by $\widetilde{\mG}$, the approach produces $\approx 2^{177.21}$ de Bruijn sequences. Figure~\ref{fig:graph} is a spanning tree. \qed

\begin{figure}[h]
\begin{tikzpicture}[auto, node distance=1.1cm, every loop/.style={},
                    thick,main node/.style={draw,font=\sffamily\bfseries}]

  \node[main node] (1) {$[\u_0]$};
  \node[main node] (2) [left of=1] {$[\0]$};
  \node[main node] (3) [above of=1] {$[\u_4]$};
  \node[main node] (4) [left of=3] {$[\u_2]$};
  \node[main node] (5) [left of=4] {$[\u_1]$};
  \node[main node] (6) [right of=3] {$[\u_8]$};

  \node[main node] (7) [right of=6]{$[\u_{16}]$};
  \node[main node] (8) [above of=3] {$[\u_{7}]$};
  \node[main node] (9) [left of=8] {$[\u_{19}]$};
  \node[main node] (10) [left of=9] {$[\u_{25}]$};
  \node[main node] (11) [right of=8] {$[\u_{14}]$};
  \node[main node] (12) [right of=11] {$[\u_{28}]$};
  
  \node[main node] (13) [above of=8] {$[\u_{21}]$};
  \node[main node] (14) [left of=13] {$[\u_{26}]$};
  \node[main node] (15) [left of=14] {$[\u_{13}]$};
  \node[main node] (16) [right of=13] {$[\u_{11}]$};
  \node[main node] (17) [right of=16] {$[\u_{22}]$};
  
  \node[main node] (18) [above of=13] {$[\u_{27}]$};
  \node[main node] (19) [left of=18] {$[\u_{29}]$};
  \node[main node] (20) [left of=19] {$[\u_{30}]$};
  \node[main node] (21) [right of=18] {$[\u_{23}]$};
  \node[main node] (22) [right of=21] {$[\u_{15}]$};

  \node[main node] (23) [below of=1] {$[\u_{12}]$};
  \node[main node] (24) [left of=23] {$[\u_{6}]$};
  \node[main node] (25) [left of=24] {$[\u_{3}]$};
  \node[main node] (26) [right of=23] {$[\u_{24}]$};
  \node[main node] (27) [right of=26] {$[\u_{17}]$};
  
  \node[main node] (28) [below of=23] {$[\u_{20}]$};
  \node[main node] (29) [left of=28] {$[\u_{10}]$};
  \node[main node] (30) [left of=29] {$[\u_{5}]$};
  \node[main node] (31) [right of=28] {$[\u_{9}]$};
  \node[main node] (32) [right of=31] {$[\u_{18}]$};
    
  \path[every node/.style={font=\sffamily}]
    (1) edge (2) 
        edge (3)
        edge (4) 
        edge (5)
        edge (6) 
        edge (7)
        edge (23) 
        edge (24)
        edge (25) 
        edge (26)
        edge (27) 
    (5) edge (10)
    (10) edge (15) 
	(15) edge (20) 
	
	(4) edge (9)
    (9) edge (14) 
	(14) edge (19)
	
	(3) edge (8)
    (8) edge (13) 
	(13) edge (18)
	
	(6) edge (11)
    (11) edge (16) 
	(16) edge (21)
	
	(7) edge (12)
    (12) edge (17) 
	(17) edge (22)
	
	(25) edge (30)
    (24) edge (29)
	(23) edge (28)
	
	(26) edge (31)
    (27) edge (32);

\end{tikzpicture}
\centering
\caption{A spanning tree in an adjacency subgraph $\widetilde{\mG}$ of $\Omega(f(x))$.}
\label{fig:graph}
\end{figure}
\end{example}

\subsection{A Note on Dong and Pei's Construction}

Dong and Pei recently proposed a construction of de Bruijn sequences with large order in~\cite{Dong16}. Given an irreducible polynomial $f(x)$ of degree $n$, order $e$, and $t=\frac{2^n-1}{e}$, they defined a shift register matrix $T$ in the form of (\ref{comx}) satisfying $f^*(T)=0$ where $f^*(x)$ is the reciprocal polynomial of $f(x)$. 
Given the sequence $\u_0$ with initial state $\alpha_{0}=(1,\0)$, one can write 
any sequence as $g(T) \u_0$, where $g(x)$ is some polynomial of degree $< n$. 
If $(1+x^k)^e \not \equiv 1 \Mod{f(x)}$, then $[\u_0]$ and $[(1+T^k)^{2^j}\u_0]$ are distinct cycles sharing the conjugate pair $\left(T^{k 2^j}\alpha_{0}, (1+T^{k})^{2^j} \alpha_{0} \right)$. Here $T^{k 2^j}\alpha_{0}$ is a state of $[\u_0]$. Their claim that $[\u_0]$ shares some conjugate pairs with each of the other nonzero cycles does {\bf not} hold in general. 

First, as $n$ and $e$ grow large, computing $(1+x^k)^e \Mod{f(x)}$ soon becomes prohibitive. 
Second, after $(1+x^k)^e \not \equiv 1\Mod{f(x)}$ is verified, it remains unclear which cycle $[(1+T^k)^{2^j} \u_0]$ corresponds to. One is left unable to judge whether it is possible to join all of the cycles in $\Omega(f(x))$ 
even after a lot of the conjugate pairs have been determined. 
Third, and most importantly, $t < \sqrt{2^n-1}$ is a necessary condition for 
their method to work~\cite[Section~5]{Dong16}. In fact, a sufficient and necessary condition is $(0,i)_t > 0$ for all $i \in \bbra{1,t-1}$. This does not hold in general. Take, for example, $n=10$ with $p(x)=x^{10}+x^3+1$ and $t=31 < \sqrt{31 \cdot 33}$. All values $1 \leq i \leq 2^{10}-2$ 
such that $\tau(i) \equiv 0 \Mod{t}$ form the set  
\begin{align*}
X:=\{&85,105,141,170,210,277,282,291,325,337,340,341,379,420,431,493,\\
     &554,564,582,650,657,674,680,682,701,727,758,840,862,875,949,986\}.
\end{align*}
Hence, $[\u_0]$ can be joined only to $[\u_{\ell}]$ with 
\[
\ell \in \{3, 6, 7, 12, 14, 15, 17, 19, 23, 24, 25, 27, 28, 29, 30 \}.
\]
Since only $15$ out of the required $30$ cycles can be joined with $[\u_0]$, 
Dong and Pei's approach fails to produce de Bruijn sequences here. We show in the next subsection that our method handles such a situation perfectly.

\subsection{Star and Almost-Star Spanning Trees}

In cases where $[\u_0]$ is adjacent to $[\u_j]$ for $j \in \bbra{1,t-1}$ we can use Theorem~\ref{thm:equivalence} to rapidly certify the existence of star spanning trees centered at $[\u_0]$ in $\mG$. The main idea is to build $\widetilde{\mG}$ such that it becomes a star graph centered at $[\u_0]$ with leaves $[\0]$ and $[\u_{j}]$ for $j \in \bbra{1,t-1}$ when we replace all of the edges between $[\u_0]$ and $[\u_j]$ by a single edge. The certificate contains the following outputs of Algorithm~\ref{algo:cert}.
\begin{enumerate}
\item A {\it witness} $\mW$ that generates $\Delta_{\mW} \triangleq 
\{i:=k \cdot t \mbox{ for } k \in \mW\}$ satisfying
\begin{equation}\label{eq:witness}
\bbra{1,t-1} \subset \bigcup_{i \in \Delta_{\mW}} 
\{j \Mod{t} : j \in D_{Y_{i}}\} \mbox{ with } Y_i:=\tau(i) \Mod{t}.
\end{equation}

\item The number $\#cp$ of identified conjugate pairs between $[\u_0]$ and $[\u_j]$.
\item A matrix $\widetilde{M}$ derived from the adjacency subgraph $\widetilde{\mG}$.
\item The exact number $\#dBSeqs$ of de Bruijn sequences that can be constructed. 
\end{enumerate} 

\begin{algorithm}[h!]
\caption{Certifying the Existence of a Star Spanning Tree}
\label{algo:cert}
\begin{algorithmic}[1]
\renewcommand{\algorithmicrequire}{\textbf{Input:}}
\renewcommand{\algorithmicensure}{\textbf{Output:}}
\Require A primitive polynomial $p(x)$ of degree $n$.
\Ensure Witness $\mW$, $\#cp$, Matrix $\widetilde{M}$ and $\#dBSeqs$.
\State{$N \gets 2^n-1$}
\State{$\F_{2^n} := \F_2(\alpha) \gets$ the extension field of $\F_2$ defined by $p(x)$}
\Comment{$\alpha$ is a root of $p(x)$}
\State{${\rm Div} \gets \{d \mbox{ such that } d \mid 2^n-1 \mbox{ and the minimal polynomial of } \alpha^d \mbox{ is of degree } n\}$}
\For{$t \in {\rm Div}$} \label{line:s}
	\State{$f(x) \gets \mbox{ the minimal polynomial of } \beta:=\alpha^t$} 
	\State{Initiate sets ${\tt Done} \gets \{0\}$, ${\tt MinSet} \gets \emptyset$, and $\mW \gets \emptyset$}\label{line9}
	\State{$\widetilde{M}=(m_{i,j}) \gets t \times t$ zero matrix}
	\For{$i \in \{(2k-1) t : 1 \leq k \leq z\}$} \label{line:z} \Comment{Letting $z=2000$ suffices for $n \leq 100$}
		\State{$L \gets \tau(i) \Mod{t}$}
		\State{$c_{L} \gets$ coset leader of $D_{L}$}
		\If{$c_{L} \Mod{t} \notin {\tt Done}$}
			\State{${\tt MinSet} \gets \{i\} \cup {\tt MinSet}$}
			\State{$\mW \gets \{2k-1\} \cup \mW$}
			\State{${\tt Done} \gets {\tt Done} \cup \{y \Mod{t}: y \in D_{L}\}$}
			\If{$\abs{{\tt Done}} = t$}
				\State{$\#cp \gets \frac{n}{\abs{D_{L}}}$}
				\State{$m_{1,1} \gets \#cp \cdot (t-1)+1$}\label{line20}
				\For{$r$ from $2$ to $t$}
					\State{$m_{r,r} \gets \#cp$}
					\State{$m_{1,r}=m_{r,1} \gets -\#cp$}
				\EndFor\label{line25}
				\State{output $\mW$ and $\#dBSeqs \gets \det(\widetilde{M})$}
				\State{break $i$}
			\EndIf
		\EndIf
	\EndFor
\EndFor	
\end{algorithmic}
\end{algorithm}

If no star spanning tree can be certified, one can instead starts by finding an index $\ell \in \bbra{1,t-1}$ such that $[\u_{\ell}]$ is the center of a star graph with $[\u_j]$ for all $j \in \bbra{0,t-1} \setminus \{\ell\}$ as its leaves. We extend this star graph by appending vertex $[\0]$ and the unique edge $E_0$ between $[\0]$ and $[\u_0]$ into a spanning tree in $\mG$. We build $\widetilde{\mG}$ by collecting such trees. For convenience we call them {\it almost-star spanning trees centered at $[\u_{\ell}]$} in $\mG$. 
 
A modified version of Algorithm~\ref{algo:cert} can be used to certify the existence of almost-star spanning trees. We replace $0$ by $\ell$ in Line~\ref{line9} and replace $(2k-1) t$ in Line~\ref{line:z} by $(2k-1) t + \ell$. The entries of $\widetilde{M}$ defined in Lines~\ref{line20} to~\ref{line25} are now given as follow.

\begin{algorithmic}[1]
\State{$m_{1,1} \gets 1+\#cp$, $m_{1,\ell+1}=m_{\ell+1,1} \gets -\#cp$}
\For{$r$ from $2$ to $t$}
\State{$m_{r,r} \gets \#cp$, $m_{\ell+1,r}=m_{r,\ell+1} \gets -\#cp$}
\EndFor
\State{$m_{\ell+1,\ell+1} \gets \#cp \cdot (t-1)$}
\end{algorithmic}

\begin{example}\label{ex:star}
There is no star spanning tree centered at $[\u_0]$ for $p(x)=x^{10}+x^{3}+1$ and $t=31$. There are $\approx 2^{99.66}$ almost-star spanning trees centered at $[\u_6]$ with witness $\mW=\{1,3,7,9,13,17,21\}$ and $f(x)=x^{10}+x^9+x^5+x+1$. 

If $n=20$, then there are $38$ choices for $t$, with $23$ of them being less than $1000$. There is no star spanning tree certificate for $p(x)=x^{20}+x^3+1$ and $t \in K:=\{165,341,451,465,615,775,825\}$.  Going through $\ell \in \bbra{2,t-1}$ produces certificates for almost-star spanning trees for all $t \in K$. Based on the $\ell$ values that yield the most number of de Bruijn sequences, we choose $\ell=2$ for $t \in \{165,341,451,465\}$, $\ell=4$ for $t=775$, and $\ell=10$ for $t \in \{615,825\}$. Among these combinations of $t$ and $\ell$, the smallest number of de Bruijn sequences produced is from $t=165$ and $\ell=2$, which is $\approx 2^{708.80}$. The largest, which is $\approx 2^{3345.17}$, comes from $t=775$ and $\ell=4$. \qed
\end{example}

There are parameter sets for which there is a unique star spanning tree, yielding only $1$ de Bruijn sequence. Examples include 
\begin{align*}
p(x) & =x^{20}+x^3+1 \mbox{ with } t \in \{41,123,205,275\},\\ 
p(x) & =x^{29}+x^2+1 \mbox{ with } t \in \{233,1103,2089\}, \mbox{ and}\\ 
p(x) & =x^{130}+x^3+1 \mbox{ with } t=131.
\end{align*}
There are certificates for almost-star 
spanning trees for all of them, ensuring the existence of a large number of de Bruijn sequences in each case.

One can of course utilize both types of certificates from the same input parameters. Counting the number of, respectively, star and almost-star spanning trees with $\ell=2$, for $p(x)=x^{128}+x^{7}+x^{2}+x+1$ and $t=255$ gives us $2^{1524}$ and $2^{1778}$ de Bruijn sequences while~\cite[Example 3]{Dong16} yields $2^{1032}$ sequences.

\begin{sidewaystable}[p]
\caption{Examples of Star and Almost-Star Spanning Tree Certificates. For Star, the center is $[\u_0]$ while for Almost-Star the center is $[\u_{\ell}]$.}
\label{table:star}
\renewcommand{\arraystretch}{1.3}
\centering
\begin{tabularx}{\textwidth}{ccccl|lccl}
\hline
No. & $n$ & $p(x)$ & $t$ & $f(x)$ & Star Witness $\mW$ & $\#cp$ & $\#dBSeqs$ & Time \\
\hline
$1$ & $100$ & $\{37\}$ & $25$ & $\{96,68,64,37,36,32,4\}$ & $\{1,15\}$ & $25$ & $\approx 2^{111.45}$ & $0.14$s \\
		
$2$ & $128$ & $\{7,2,1\}$ & $255$ & $\{128,127,126,125,123,122,119,118,117,114,112,$ & $\{1,3,5,7,9,11,13,15,17,$ & $64$ & $2^{1524}$ & $1$m$13$s\\ 
& &  & & $~~109,108,106,105,103,99,98,96,94,93,91,89,87,$ & $~~19,21,27,29,31,33,37,$ & &  & \\
& &  & & $~~83,82,81,74,70,68,67,66,65,63,60,59,56,53,$ & $~~43,45,47,53,57,65,77,$ & &  & \\
& & & & $~~52,51,50,46,45,44,43,42,41,40,39,38,36,31,$ & $~~79,83,101,107,123,133,$  & & & \\
& & & & $~~30,29,28,27,26,25,24,22,15,13,11,10,9,6,4,1\}$ & $~~141,145,177,187,929\}$  & & & \\
		
$3$ & $130$  & $\{3\}$  & $93$ & $\{97,89,64,63,48,47,43,42,21,11,10,5,3,2,1\}$ & $\{1,3,5,9,11,15,23,31,33,$ & $26$ & $\approx 2^{432.44}$ & $0.78$s \\
&&&&& $~~35,43,73,101\}$ && \\
		
$4$ & $300$ & $\{7\}$ & $77$ & $\{273,220,219,193,192,191,165,164,139,111,$ & $\{1,3,5,97,125\}$ & $100$ & $\approx 2^{504.93}$ & $5$m$41$s \\
& & & & $~~110,86,85,83,82,30,29,28,7,6,5,4,3,2,1\}$ & & & & \\
		
\hline
No. & $n$ & $p(x)$ & $t$ & $f(x)$ & Almost-Star Witness $\mW$ and $\ell$ & $\#cp$ & $\#dBSeqs$ & Time \\
\hline
		
$5$ & $20$ & $\{3\}$ & $205$ & $\{18,17,15,14,9,8,4,2,1\}$ & $\{1,3,5,7,9,11,21,23,25,$ & $20$ & $\approx 2^{881.67}$ & $0.02$s \\
& & & & & $~~41,53,155\}$, $\ell=2$  & & & \\
		
$6$ & $29$ & $\{2\}$ & $233$ & $\{24, 22,20,18,16,15,14,13,12,11,$ & $\{1,3,5,9,15,17,19,33,79\}$, $\ell=2$ & $29$ & $\approx 2^{1127.05}$ & $0.08$s \\
& & & & $~10,9,8,7,5,4,2\}$ & & & & \\
		
$7$ & $128$ & $\{7,2,1\}$ & $255$ & See Entry $2$ above & $\{1,3,5,7,9,11,13,17,19,23,25,$ & $128$ & $2^{1778}$ & $1$m$24$s\\ 
& &  & &  & $~~29,31,33,37,45,49,53,55,57,$ & &  & \\
& &  & &  & $~~67,89,91,103,107,111,119,$ & &  & \\
& & & &  & $~~139,143,159,201,237,251,$  & & & \\
& & & &  & $~~343,465\}$, $\ell=2$ & & & \\
		
$8$ &  $130$ & $\{3\}$ & $131$ & $\{96,72,65,48,36,34,24,17,12,10,5,4,2\}$ & $\{1,171\}$, $\ell=2$ & $130$ & $\approx 2^{912.91}$ & $0.86$s \\
\hline
\end{tabularx}
\end{sidewaystable}

Table~\ref{table:star} lists more examples. For a compact presentation we use sparse primitive polynomials. They are either trinomials, \ie, $p(x)=x^n+x^k+1$ with $1 \leq k <n$, or $p(x)=x^n+x^k+x^j+x^i+1$ with $1 \leq i < j < k < n$. 
Given $n$ and $t$, the  chosen polynomial $p(x)$ and its corresponding $f(x)$ are presented as sets whose elements are the powers of $x$, from $1$ to $n-1$, whose coefficients are $1$. Hence, for $n=130$ and $t=31$,  
$p(x)=x^{130}+x^{3}+1$ and $f(x)=x^{130}+ x^{63} + x^{31}+x^{15} + x^{7} + x^{3}+1$. 
Its witness $\mW=\{1,3,7,9,17,45\}$ builds $\Delta_{\mW}=\{31,93,217,279,527,1395 \}$, 
implying $\{Y_i=\tau(i) \Mod{31} : i \in \Delta_{\mW}\}=\{20,16,14,13,23,12\}$. The corresponding sets $\{j \Mod{t} : j \in D_{Y_{i}}\}$ are
\begin{align*}
&\{20,9,18,5,10\}, \{16,1,2,4,8\}, \{14,28,25,19,7\},\\
&\{13,26,21,11,22\},\{23,15,30,29,27\},\{12,24,17,3,6\}.
\end{align*}
Their union is $\bbra{1,30}$. Note that $[\u_0]$ and $[\u_{\ell}]$ share 
$\frac{130}{5}=26$ conjugate pairs for $\ell \in \bbra{1,30}$. Computing for $\#dBSeqs$ is then straightforward. The other entries can be similarly interpreted. 
The recorded running time is for the specified $(n,p(x),t)$ with $\ell$ added for cases where the center of the almost-star trees is 
$[\u_{\ell}]$. 

The certificates are easily computable, assuming commonly available resources in modern PCs. We performed the computations for all of the examples above on a laptop with Ubuntu 16.04 OS powered by an Intel i7-7500U CPU 2.90GHz, running MAGMA V2.24-5 \cite{BCP97} on $11.6$ GB of available memory. In fact, we computed the certificates on input $p(x)=x^{128} + x^7 + x^2 + x + 1$ and $t \in \{1,3,5,15,17,51,85\}$ on the online calculator \url{http://magma.maths.usyd.edu.au/calc/}. Each $t$ takes less than the $120$s time restriction when $z$ is set to $100$.

To get a better sense of the complexities, we implemented Algorithm~\ref{algo:cert} and its modification on all $4 \leq n \leq 100$ and $n \in \{113,119,128,130,256,300,512\}$. We use the Finite Field Logarithm Database ({\tt libs/data/FldFinLog}) of MAGMA to handle $n > 130$. Except for a number of cases, it suffices to bound $ t \leq 100$ on Line~\ref{line:s} and set $z$ in Line~\ref{line:z} to $2000$. Doing so already guarantees us a very large number of spanning trees. This keeps the running time and memory demand low. On the already described machine, with the added restrictions, the entire simulation for all $n$ and $t$ combinations took less than $50$ hours to complete. Without loss of generality, due to Proposition~\ref{prop:commdiag}, we use the default choice of $p(x)$ in MAGMA. Interested readers may contact the corresponding author for the simulation data.

What about the exceptional cases? For $n \in \{5,7,13,17,19,31,61,89,107,127\}$ there is simply no valid $t > 1$. Table~\ref{table:splist} lists $n \leq 128$ and its smallest $t>1$. The memory demand for the shaded entries exceeds $16$~GB. The rest of the entries can be easily handled to obtain the relevant star and almost-star certificates.

\begin{table}[h!]
\caption{Exceptional $n$ for which the smallest $t > 1$ is larger than $100$}\label{table:splist}
\renewcommand{\arraystretch}{1.1}
\centering
\begin{tabular}{c l | c l | c l | c l}
\hline
$n$ & $t$ & $n$ & $t$ & 
$n$ & $t$ & $n$ & $t$ \\
\hline
$29$ & $233$   & $49$ & $127$ &  $73$ &  $439$ & $\colorbox[gray]{0.8}{101}$ & $\colorbox[gray]{0.8}{7432339208719}$ \\
$37$ & $223$   & $53$ & $6361$ & $79$ &  $2687$ & $\colorbox[gray]{0.8}{103}$ & $\colorbox[gray]{0.8}{2550183799}$\\
$41$ & $13367$ & $\colorbox[gray]{0.8}{59}$ & $\colorbox[gray]{0.8}{179951}$ 
& $83$ & $167$ & $\colorbox[gray]{0.8}{109}$ & $\colorbox[gray]{0.8}{745988807}$\\
$43$ & $431$   & $\colorbox[gray]{0.8}{67}$ & $\colorbox[gray]{0.8}{193707721}$ & $91$ & $127$ & $113$ & $3391$ \\
$47$ & $2351$  & $\colorbox[gray]{0.8}{71}$ & $\colorbox[gray]{0.8}{228479}$ & $97$ & $11447$ & $119$ & $127$ \\
\hline
\end{tabular}
\end{table}

A star spanning tree in $\mG$ exists if and only if the cyclotomic number $(0,j)^n_t > 0$ for the given $n$ and $t$ and for all $j \in \bbra{1,t-1}$. Similarly, an almost-star spanning tree centered at $[\u_{\ell}]$ exists if and only if the cyclotomic number $(\ell,j)^n_t > 0$ for the given $n$ and $t$ and for all $j \in \bbra{0,t-1} \setminus \{\ell\}$. There are a few cases for which neither a star nor an almost-star certificate is found. Some of the examples are $n=10$ with $t=93$, $n=11$ with $t=89$, and $n=12$ with $t \in \{91,105,117,315\}$.

\section{A Basic Software Implementation for Cycle Joining}\label{sec:program}

We make a basic implementation of the cycle joining method via Zech's logarithms  available online in~\cite{EFZech}. It requires {\tt python 2.7} equipped with {\tt sympy} and some access to MAGMA, either locally or through internet connection to its online calculator. Many users care more about the speed of generation and need only a relatively small numbers of de Bruijn sequences. Others may prefer completeness and would like to produce all de Bruijn sequences that can be generated from the given input polynomial and a stipulated number $t$ of cycles. We cater to both types of users. 

Given an order $n$, the software takes as input a primitive polynomial $p(x)$ of order $n$ supplied by a user or let MAGMA choose such a $p(x)$. The set ${\rm Div}$ of all valid $t$ is computed. When one such $t$ is chosen by the user, the routine generates the corresponding irreducible polynomial $f(x)$ and a complete set of coset representatives $\mathcal{R}_n$. It seems that there is no closed formula for $\abs{\mathcal{R}_n}$, which is the number of monic irreducible factors of $x^{2^n-1}-1$ over $\F_2$. Based on $f(x)$, the cycle structure is now completely determined once the program ensures that $(1,\0)$ is the initial state of $[\u_0]$. Depending on the user's requirements, there are several options on offer. 

\begin{enumerate}
	\item For small orders, say $n \leq 30$, the program stores the Zech's logarithms of the elements in $\mathcal{R}_n$ in a dictionary form. The filesize is in tens of MB for $n=30$ and tends to double from $n$ to $n+1$. The complete adjacency graph $\mG$ can then be generated by using the {\tt Double} map.
	\item Given combinations of $n$ and $t$ for which it is practical to produce witnesses for enough number of star or almost-star spanning trees, the program finds the Zech's logarithms of the elements in the witness set $\mathcal{W}$ and stores the corresponding adjacency subgraph $\widetilde{\mG}$.
	\item The program asks MAGMA for the Zech's logarithms of a few random entries, based on $p(x)$, and applies the maps in Subsection~\ref{subsec:zech} to generate a partial Zech's logarithm table. It attempts to build $\widetilde{\mG}$ from the table. If not all vertices are connected, it identifies pairs of vertices that still need to be connected and uses the information to get the required Zech's logarithms. It stops when a connected adjacency subgraph containing all vertices is obtained. 
\end{enumerate}

Finally, it chooses the required number of spanning trees in $\widetilde{\mG}$ or, in Option 1, those in $\mG$, and produces de Bruijn sequences using the procedure explained in~\cite[Section VI]{Chang2019}. Various modifications can be done on the routine to make it run faster, \eg, if generating only a few de Bruijn sequences is all a user wants. 

\begin{example}
There are $\approx 2^{145.73}$ de Bruijn sequences that can be produced on input $p(x)=x^{10}+x^3+1$ and $t=31$. It takes $0.072$ seconds and a negligible amount of memory to output one of them. The simplified adjacency subgraph, where multiple edges between two vertices is presented as one, is in Figure~\ref{fig:10} with vertex $i \in \bbra{1,31}$ representing $[\u_i]$ and $0$ representing $[\0]$. On input $p(x)=x^{22}+x+1$ and $t=89$, it takes $5$ minutes $30$ seconds and $680$ MB of memory to print one of the $\approx 2^{425.42}$ resulting sequences out to a file. 

The program comes with an option to produce the ANF, instead of the actual sequence. Here is such an instance on input $p(x)=x^6 + x + 1$ and $t=7$. We have $f(x)= x^6 + x^3 + 1$. One of the resulting de Bruijn sequences is 
\[
(00000011~01110011~10110001~11111000~01011110~10101101~00010010~10011001),
\]
whose ANF is 
\begin{multline*}
x_0 + x_1   x_2   x_3   x_4   x_5 + x_1   x_2   x_3   x_5 + x_1   x_2   x_4   x_5 + x_1   x_2   x_5 + x_1   x_3   x_4 \\
+ x_1   x_3 + x_1   x_4 + x_1 + x_2   x_3 + x_2 + x_3   x_4   x_5 + x_4   x_5 + 1.
\end{multline*}
\qed

\begin{figure}[!t]
\centering
\includegraphics[height=80.0mm,keepaspectratio]{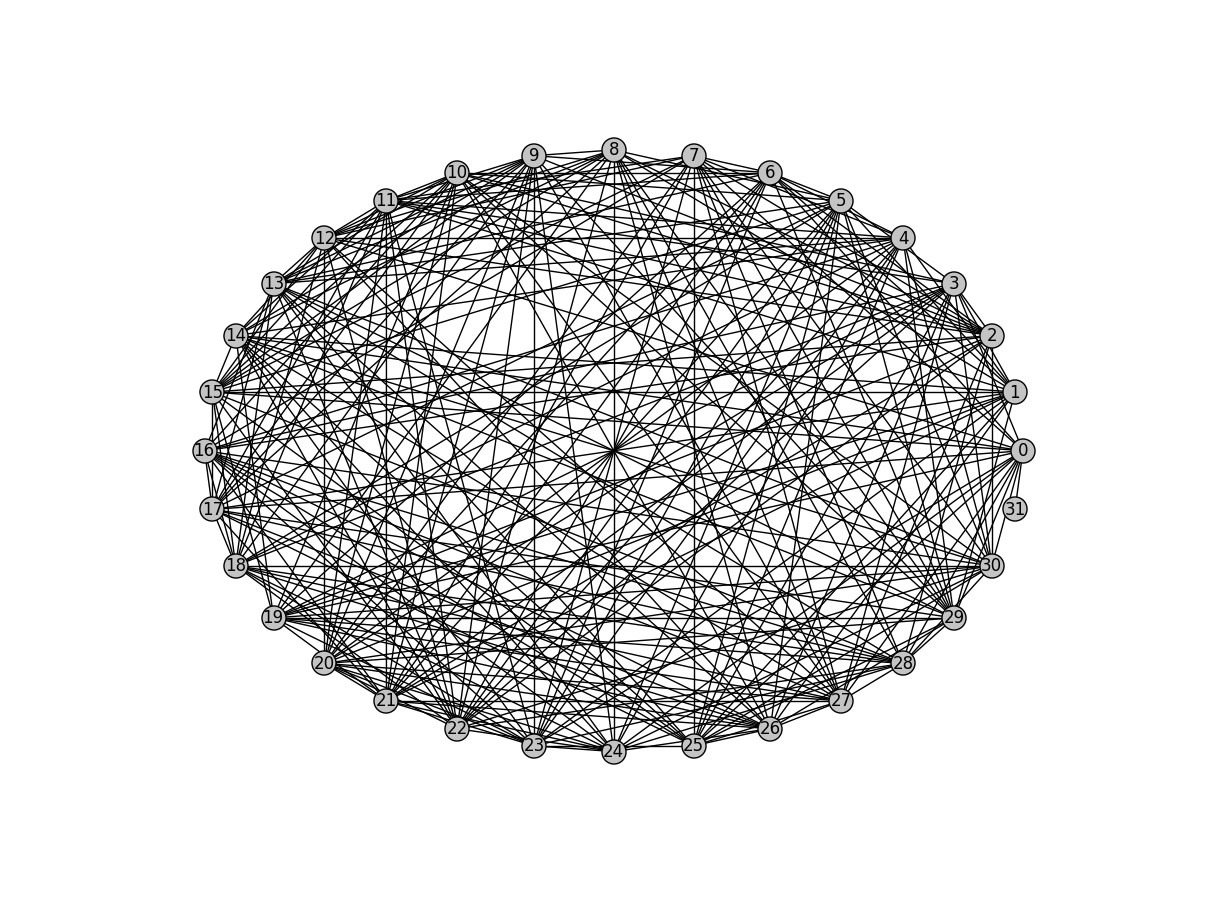}
\caption{A simplified adjacency subgraph $\widetilde{\mG}$ for $f(x)=x^{10}+x^3+1$ with $t=31$.}
\label{fig:10}
\end{figure}

\end{example}

\noindent{\bf Harvesting the cyclotomic numbers} \\
So far there is no closed formula for cyclotomic numbers defined in (\ref{eq:cycnum}), except in several limited classes. Storer's textbook~\cite{Storer67} and a more recent one by Ding~\cite{Ding14} have more details. Determining the values computationally is a by-product of our implementation in which we obtain the respective cyclotomic numbers on the input parameters specified in Table~\ref{table:cycnumbers}. They are available for download in our online repository~\cite{EFZech}. 

\begin{table}[h]\caption{The Input Parameters of the Computed $(i,j)_t^n$ for $0 \leq i,j \leq t$}
\label{table:cycnumbers}
\renewcommand{\arraystretch}{1.1}
\centering
\begin{tabular}{cll}
\hline 
$n$  & Set of feasible $t$	&  A suitable $p(x)$  \\  
\hline 
$4$	& $\{3\}$  & $x^4+x+1$  \\

$6$  & $\{3,7\}$	&  $x^6+x+1$ \\ 

$8$ & $\{3,5,15\}$ & $x^8+x^5+x^3+x+1$ \\

$9$ & $\{7\}$	& $x^9 + x^4 + 1$ \\

$10$  & $\{3,11,31,93\}$ & $x^{10}+x^3+1$ \\

$11$  & $\{23,89\}$	&  $x^{11}+x^2+1$  \\ 

$12$  & $\{3,5,7,9,13,15,21,35,39,45,63,91,105,117,315\}$ &  $x^{12}+x^8+x^2+x+1$ \\
$14$ & $\{3,43,127,381\}$ &  $x^{14}+x^{12}+x^2+x+1$\\

$15$ &$\{7,31,151,217\}$ &  $x^{15}+x^{4}+x^2+x+1$\\

$16$ & $\{3,5,15,17,51,85,255\}$ & $x^{16}+x^{12}+x^{3}+x+1$\\

\hline
\end{tabular} 
\end{table}

\section{Product of Irreducibles}\label{sec:prod}

The approach via Zech's logarithms can be used in tandem with the one in~\cite{Chang2019} to generate de Bruijn sequences of even larger orders. To keep the exposition brief, we retain the notations from the said reference.

Let $\{f_1(x), f_2(x), \ldots,f_s(x)\}$ be a set of $s$ pairwise distinct irreducible polynomials
over $\F_2$. Each $f_i(x)$ has degree $n_i$, order $e_i$ with $t_i=\frac{2^{n_i}-1}{e_i}$, 
and a root $\beta_i$. Let the respective associated primitive polynomials be $p_i(x)$ with 
degree $n_i$ and root $\alpha_i$. Hence,
$\Omega(f_i(x))=[\0]\cup[\u_0^i]\cup[\u_1^i]\cup\ldots\cup[\u_{t_i-1}^i]$. Let the initial state of $\u_0^i$ be $(1,\0)\in\F_2^{n_i}$. The initial states 
of $\u_j$ for $j \in \bbra{1,t_i-1}$ follows by decimating the appropriate $m$-sequence 
$\m_i$ generated by $p_i(x)$. For the rest of this section, let
\begin{equation}\label{eq:prod}
f(x):=\prod_{i=1}^{s} f_i(x) \mbox{ and } n:=\sum_{i=1}^s n_i.
\end{equation}

We use the expression for the cycle structure of $\Omega(f(x))$ given in~\cite[Lemma 3 Eq.~(7)]{Chang2019}. 
For any cycle $\Gamma_1:=[\u^1_{i_1}+L^{\ell_2}\u^2_{i_2}+\cdots+L^{\ell_s}\u^s_{i_s}]$ containing 
a state $\v$ the goal is to identify a cycle $\Gamma_2$ that contains $\overline{\v}$. Letting $\mP$ be the matrix defined in~\cite[Section III.B]{Chang2019}, 
$\v=(\v_1,\ldots,\v_s) \mP$ with $\v_i:=\varphi(\alpha_i^{j_i})$ for 
$i \in \bbra{1,s}$. One then gets a state $\a_i$ of some nonzero sequence in $\Omega(f_i(x))$ satisfying $(\a_1,\ldots,\a_s)\mP=(1,\0)$. The exact position of each $\a_i$ 
in the corresponding cycle in $\Omega(f_i(x))$, \ie, the exact value of $\gamma_i$
satisfying $\a_i=\varphi(\alpha_i^{\gamma_i})$ is computed using the method from Section~\ref{sec3} or by an exhaustive search when $n_i$ is small. The conjugate state $\overline{\v}=(\b_1,\ldots,\b_s)\mP$ of $\v$ must then be 
\begin{equation}\label{eq:conjupi}
\b_i=\varphi(\alpha_i^{j_i})+\varphi(\alpha_i^{\gamma_i})=\varphi(\alpha_i^{j_i}+\alpha_i^{\gamma_i})
=\varphi(\alpha_i^{\gamma_i}(1+\alpha_i^{j_i-\gamma_i})) =\varphi(\alpha_i^{\gamma_i+\tau_i(j_i-\gamma_i)}),
\end{equation}
with $\tau_i$ based on $p_i(x)$. 
If $\b_i$ is the $j_i$-th state of $\u^i_{k_i}$ for all $i$, then $\overline{\v}$ must be in cycle
$\left[L^{j_1}\u^1_{k_1}+L^{j_2}\u^2_{k_2}+\cdots+L^{j_s}\u^s_{k_s}\right]$. 

Thus, given any nonzero cycle $\Gamma_1$ in $\Omega(f(x))$ we can determine any of its state $\v$, 
find the conjugate state $\overline{\v}$, and the cycle $\Gamma_2$ that $\overline{\v}$ is a state of. If so desired, all conjugate pairs shared by any adjacent cycles can be determined explicitly. 
Finally, the steps detailed in~\cite[Sections IV and VI]{Chang2019} yield actual de Bruijn sequences.

Using $f(x)$ in (\ref{eq:prod}) may become crucial when substantially more than $\Lambda_n$ de Bruijn sequences of order a Mersenne exponent $n$ need to be produced. The simplest choice is to use $s=2$ with $f_1(x)$ an irreducible polynomial of a small degree, \eg, $1+x$ or $1+x+x^2$, and $f_2(x)$ any irreducible non-primitive polynomial of degree $n-1$ or $n-2$, respectively. If even more de Bruijn sequences are required, one should use $s \geq 3$ and choose a small $n_i$ for $i \in \bbra{1,s-1}$ since computing the Zech logarithm table relative to a small $n_i$ is easy.

\section{Zech's Logarithms and Cross-Join Pairs}\label{sec:cross}

Mykkeltveit and Szmidt showed in~\cite{MS15} that, starting from any de Bruijn sequence of a given order $n$, one can construct all de Bruijn sequences of that order by repeated applications of the cross-join pairing method. We now use Zech's logarithms to find the cross-join pairs of states. This enables us to construct the feedback functions of \emph{NLFSRs} of maximum period. The Fryers Formula for an $m$-sequence is a new analytic tool in the theory. We use the coefficients in the formula as an algorithmic tool here.

\subsection{Basic Notions and Results}
There exists a Boolean function $g$ on $n-1$ variables such that any non-singular feedback function $h$ satisfies $h(x_{0}, x_{1}, \ldots, x_{n-1}) = 
x_{0} + g(x_{1},\ldots,x_{n-1})$~\cite{Golomb81}. A {\it modified de Bruijn sequence} $\bS'$ of order $n$ is a sequence of length $2^{n}-1 $ obtained from a de Bruijn sequence $\bS$ of order $n$ by removing one zero from $\bS$'s tuple of $n$ consecutive zeros. 

\begin{theorem}(\cite{Golomb81}) 
Let $\bS = (s_0,s_1,\ldots,s_{2^n-1})$ be a de Bruijn sequence of order $n$. Then there exists a Boolean function $g(x_1,\ldots,x_{n-1})$ such that 
\[s_{t+n} = s_t + g(s_{t+1},\ldots,s_{t+n-1}) \mbox{ for } t \in \bbra{0,2^{n}-n-1}.\]
\end{theorem}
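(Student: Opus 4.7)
The plan is to derive the desired form by exploiting the defining property of a de Bruijn sequence: every $n$-tuple over $\F_2$ occurs exactly once per period. This translates into saying that the shift map $T\colon \F_2^n \to \F_2^n$ sending $(s_t,s_{t+1},\ldots,s_{t+n-1}) \mapsto (s_{t+1},\ldots,s_{t+n-1},s_{t+n})$ is a bijection, and in fact traces a Hamiltonian cycle on the de Bruijn graph. Consequently, the rule $h(s_t,s_{t+1},\ldots,s_{t+n-1})=s_{t+n}$ is a well-defined Boolean function $h\colon \F_2^n \to \F_2$, since every $n$-tuple appears exactly once as an argument.

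First I would fix an arbitrary $(n-1)$-tuple $(x_1,\ldots,x_{n-1}) \in \F_2^{n-1}$ and consider the two $n$-tuples sharing this suffix, namely $(0,x_1,\ldots,x_{n-1})$ and $(1,x_1,\ldots,x_{n-1})$. Both of these occur in $\bS$, and under $T$ they are sent to states of the form $(x_1,\ldots,x_{n-1},\ast)$. But there are only two such states, namely $(x_1,\ldots,x_{n-1},0)$ and $(x_1,\ldots,x_{n-1},1)$, and both must be visited exactly once. Since $T$ is a bijection, the two images must be distinct, so
\[
\{h(0,x_1,\ldots,x_{n-1}),\, h(1,x_1,\ldots,x_{n-1})\}=\{0,1\}.
\]
Equivalently, $h(0,x_1,\ldots,x_{n-1}) + h(1,x_1,\ldots,x_{n-1}) = 1$ in $\F_2$.

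Next I would define $g(x_1,\ldots,x_{n-1}) := h(0,x_1,\ldots,x_{n-1})$. The relation just obtained gives $h(1,x_1,\ldots,x_{n-1}) = 1 + g(x_1,\ldots,x_{n-1})$. Combining both cases into one formula yields $h(x_0,x_1,\ldots,x_{n-1}) = x_0 + g(x_1,\ldots,x_{n-1})$ for every $(x_0,x_1,\ldots,x_{n-1}) \in \F_2^n$. Substituting $(x_0,x_1,\ldots,x_{n-1})=(s_t,s_{t+1},\ldots,s_{t+n-1})$ and using $h(s_t,\ldots,s_{t+n-1}) = s_{t+n}$ gives the desired recursion
\[
s_{t+n}=s_t+g(s_{t+1},\ldots,s_{t+n-1}) \text{ for } t\in\bbra{0,2^{n}-n-1}.
\]

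There is no real obstacle here; the only point requiring mild care is to justify that $h$ is well-defined on all of $\F_2^n$ (which uses the de Bruijn property that every $n$-tuple occurs) and that the range of $t$ in the statement is exactly the range over which $s_{t+n}$ is defined within a single period, namely $t+n \le 2^n-1$.
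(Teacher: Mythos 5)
Your proof is correct: the observation that the state-successor map of a de Bruijn cycle is a bijection, hence sends the two states $(0,x_1,\ldots,x_{n-1})$ and $(1,x_1,\ldots,x_{n-1})$ to distinct states sharing the prefix $(x_1,\ldots,x_{n-1})$, forces $h(0,x_1,\ldots,x_{n-1})+h(1,x_1,\ldots,x_{n-1})=1$ and yields the decomposition $h(x_0,\ldots,x_{n-1})=x_0+g(x_1,\ldots,x_{n-1})$ exactly as you argue. The paper states this result without proof, citing Golomb's book, so there is no in-paper argument to compare against; yours is the standard proof of this classical fact and is complete, including the correct handling of the range $t\in\bbra{0,2^n-n-1}$.
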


Let $(\bm{\alpha}=(a_0,\bm{A}), \overline{\bm{\alpha}}=(a_0+1,\bm{A}))$ and $(\bm{\beta}=(b_0,\bm{B}), \overline{\bm{\beta}}=(b_0+1,\bm{B}))$ be two conjugate pairs from a feedback function $h$, given $\bm{A}=(a_1,a_2,\ldots,a_{n-1}) \neq \bm{B}=(b_1,b_2,\ldots,b_{n-1})$. 
Then $(\bm{A},\bm{B})$ is the corresponding {\it cross-join pair} generated by the \emph{FSR} if the states $\bm{\alpha}, \bm{\beta}, \overline{\bm{\alpha}}, \overline{\bm{\beta}}$ occur in exactly that order. Let $\bS$ be a de Bruijn sequence of order $n$ generated by the feedback function $h$. Let $(\bm{A},\bm{B})$ be the cross-join pair. Then the feedback function
\begin{equation}\label{eq:ff_NLFSR} 
\widetilde{h}(x_0,\ldots,x_{n-1}) = h(x_0, x_1, \ldots , x_{n-1}) +  \prod^{n-1}_{i=1} (x_i + a_i + 1) + \prod^{n-1}_{i=1} (x_i + b_i + 1) 
\end{equation}
generates a new de Bruijn sequence $\bR$. The modified sequences $\bS'$ and $\bR'$ are similarly connected.

\begin{theorem}(\cite[Theorem 3.1]{MS15})
Let $\bS$ and $\bR$ be distinct de Bruijn sequences of order $n$. Then $\bS$  can be obtained from  $\bR$ by repeatedly applying the cross-join method.
\end{theorem}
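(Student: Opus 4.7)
My plan is to parameterize de Bruijn sequences of order $n$ by spanning trees of a fixed base adjacency graph and then realize each cross-join as a local edge exchange in that graph. Take the pure cycling register (PCR) with feedback $h_0(x_0,\ldots,x_{n-1})=x_0$; its cycles are the necklace equivalence classes of $\F_2^n$, and, by the spanning-tree bijection stated after Definition~\ref{def:adjgraph} together with Theorem~\ref{BEST}, its adjacency graph $\mG_0$ has exactly $2^{2^{n-1}-n}$ spanning trees, one for each de Bruijn sequence of that order. Write $\bS_T$ for the de Bruijn sequence obtained from the spanning tree $T$.

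The cross-join transformation is an involution on the set of de Bruijn sequences of order $n$: the pair $(\bm{A},\bm{B})$ remains a cross-join pair in the output and a second application returns the input, so ``connected by a sequence of cross-joins'' is an equivalence relation. It therefore suffices to show this relation has a single class, which I reduce to the following key lemma: if $T_1,T_2$ are spanning trees of $\mG_0$ with $T_2=(T_1\setminus\{e\})\cup\{e'\}$ for distinct edges $e=(\bm{\alpha},\widehat{\bm{\alpha}})$ and $e'=(\bm{\beta},\widehat{\bm{\beta}})$, then $\bS_{T_2}$ is obtained from $\bS_{T_1}$ by a single cross-join with pair $(\bm{A},\bm{B})$.

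To prove the lemma, removing $e$ splits the vertex set of $T_1$ into two components $V_1,V_2$; undoing the $e$-join in $\bS_{T_1}$ (swapping the successors of $\bm{\alpha}$ and $\widehat{\bm{\alpha}}$ as in~(\ref{eq:CJ_feed})) breaks $\bS_{T_1}$ into exactly two cycles $K_1,K_2$, one per component, each being the cycle-joining of the PCR cycles in that component along the remaining edges of $T_1\setminus\{e\}$. Because $T_2$ is a spanning tree, the edge $e'$ straddles $V_1$ and $V_2$, so after possible relabelling one has $\bm{\beta}\in K_1$ and $\widehat{\bm{\beta}}\in K_2$. Traversing $\bS_{T_1}$ around its single cycle one therefore meets the four states in the interleaved order $\bm{\alpha},\bm{\beta},\widehat{\bm{\alpha}},\widehat{\bm{\beta}}$, which is precisely the defining condition of a cross-join pair, and the simultaneous successor swaps in~(\ref{eq:ff_NLFSR}) undo the $e$-join and install the $e'$-join, producing $\bS_{T_2}$.

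To finish, I invoke the classical matroid-theoretic fact that the spanning trees of any finite connected multigraph form a single orbit under the single-edge exchange relation; chaining the lemma along such a sequence of exchanges from $T_{\bR}$ to $T_{\bS}$ realizes $\bS$ from $\bR$ by repeated cross-joins. The main obstacle is the geometric verification in the lemma that $\bm{\beta}$ and $\widehat{\bm{\beta}}$ really lie in different cycles after the $e$-split (equivalently, that the interleaved cyclic order holds without any additional hypothesis). This is exactly where the assumption that $T_2$ is also a spanning tree is used, and some care is needed to handle the vertex $[\0]$ together with the unique edge joining it to $[\u_0]$, which cannot play the role of the exchanged edge $e$ and so mildly constrains the admissible chains of exchanges.
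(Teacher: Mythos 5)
Your reduction to spanning-tree exchange rests on the claim that the adjacency graph $\mG_0$ of the pure cycling register has exactly $2^{2^{n-1}-n}$ spanning trees, ``one for each de Bruijn sequence of that order.'' That claim is false, and it is the load-bearing step. The bijection quoted after Definition~\ref{def:adjgraph} is between spanning trees of $\mG$ and the de Bruijn sequences \emph{constructed by the cycle joining method from that particular FSR}, not all de Bruijn sequences of order $n$. Already for $n=4$ the PCR has the six cycles $[0000],[0001],[0011],[0101],[0111],[1111]$ and eight conjugate-pair edges; the two edges at the leaves $[0000]$ and $[1111]$ are forced, and the remaining four vertices sit on a $4$-cycle with edge multiplicities $2,2,1,1$, giving $2+2+4+4=12$ spanning trees --- but there are $2^{8-4}=16$ de Bruijn sequences of order $4$. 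So four order-$4$ de Bruijn sequences are not of the form $\bS_T$ for any spanning tree $T$ of $\mG_0$, and your argument never reaches them: basis exchange in the graphic matroid only connects the sequences inside the image of the cycle-joining map for the chosen FSR, and no fixed adjacency graph parameterizes all de Bruijn sequences in general.

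Your local lemma --- that replacing a tree edge $e$ by another edge $e'$ reconnecting the two components corresponds to a single cross-join on $\bS_{T_1}$, because $\bm{\beta}$ and $\widehat{\bm{\beta}}$ land in the two different cycles created by undoing the $e$-join and hence interleave with $\bm{\alpha},\widehat{\bm{\alpha}}$ --- is essentially sound and is a nice observation, but it only shows that all cycle-joining descendants of a \emph{fixed} FSR are mutually reachable by cross-joins. The theorem concerns arbitrary $\bS$ and $\bR$. The proof in the cited source \cite{MS15} (the present paper only cites the result, it does not prove it) takes a different route: it compares the truth tables of $g_{\bS}$ and $g_{\bR}$ in the feedback functions $x_0+g(x_1,\ldots,x_{n-1})$, shows that whenever the disagreement set is nonempty it contains two points forming a cross-join pair for $\bS$, and inducts on the (even) size of that set. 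To salvage your approach you would need to prove separately that every de Bruijn sequence is cross-join-reachable from some $\bS_T$, which is essentially the original problem again.
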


\begin{example}
Consider the de Bruijn sequences of order $5$
\begin{align*}	
\bS &=(0000~0110~\underline{1100}~0100~1\overline{011~1}\underline{110~\overline{0}}\overline{111}~0101) \mbox{ and }\\
\bR &=(0000~0110~1110~1010~0101~1001~1111~0001).
\end{align*}	
Applying the method on $\bS$ using the pair $(\bm{A}=1100,\bm{B}=0111)$ results in an intermediate de Bruijn sequence $\mathbf{T}:=(0000~0110~110\overrightarrow{0~1}1\overrightarrow{1~1}10\overrightarrow{00}~1001~011\overrightarrow{1~0}101)$ where a right arrow indicates a crossover jump. We obtain $\bR$ by using the pair $(\bm{A}=1011,\bm{B}=0100)$ on $\mathbf{T}$. The respective feedback functions
\begin{align*}
h_{\bS} & = x_0 + x_1~ ( x_2 \cdot x_3 \cdot x_4 + x_2 \cdot x_4 + x_3 \cdot x_4 + x_3 + 1) 
+ x_2 ~(x_3 \cdot x_4 + x_3) + x_3 + 1, \\
h_{\mathbf{T}} & = x_0 + x_1~ ( x_2 \cdot x_3 \cdot x_4 + x_2 \cdot x_3 + x_2 + x_3 \cdot x_4 + x_3 + 1) + x_2 \cdot x_3 + x_3 + 1, \\
h_{\bR} & = x_0 + x_1~ ( x_2 \cdot x_3 \cdot x_4 + x_2 \cdot x_4 + x_3 + 1) +  
x_2 ~(x_3 \cdot x_4 + x_4+1) + x_3 + 1,
\end{align*}
satisfy Equation (\ref{eq:ff_NLFSR}). \qed
\end{example}

\subsection{Zech's Logarithms as Positional Markings}

The keen reader must have recognized by now that only a simple adaptation is required to determine cross-join pairs by using Zech's logarithms. Instead of identifying conjugate pairs whose respective components belong to distinct cycles, we use Zech's logarithms to identify states $\bm{\alpha}, \bm{\beta}, \overline{\bm{\alpha}}, \overline{\bm{\beta}}$ that occur in that exact order in a given (modified) de Bruijn sequence $\bS$. We retain the standard representation of states based on the $\varphi$ map with $1,\0^{n-1}$ as the initial state of any $m$-sequence of order $n$.

We demonstrate how to use Zech logarithms to recursively identify cross-join pairs in an $m$-sequence to generate numerous feedback functions of the NLFSRs that produce modified de Bruijn sequences. Remark~\ref{rem:3} comes in handy here. 

Let $\alpha$ be a root of the primitive trinomial $p(x) = x^{31} + x^3 +1$. Using the $\double$ map, one gets $\tau(3) = 31$ and $\tau(6) = 62$. We use the notation $ c = (3,6,31,62)$ to refer to two pairs of states 
$(\varphi(\alpha^{3}), \varphi(1+ \alpha^{3}=\alpha^{31}))$ and $(\varphi(\alpha^{6}), \varphi(1 + \alpha^{6} = \alpha^{62}))$. 
Hence, the states of the LFSR, with $p(x)$ as the characteristic polynomial, at the crossover jumps are governed by $\bm{\alpha}= (0,\0^{27},1,0,0)$ and $\bm{\beta}= (0,\0^{24},1,\0^5)$. Thus, $a_{28}=1$ and $b_{25}=1$ while $a_j=b_k=0$ for $1 \leq j \neq 28 \leq 30$ and $1 \leq k \neq 25 \leq 30$. The feedback function of the constructed NLFSR of period $2^{31}-1$ is, therefore, 
\[
\widetilde{h} = x_0 + x_3 +  \prod^{30}_{j=1} (x_j + a_j + 1) + \prod^{30}_{k=1} (x_k + b_k + 1)
\]
with algebraic degree $29$.

Now, consider the primitive trinomial $p(x) = x^{127} + x + 1$. 
From $\tau(1) = 127$ and $\tau(2) = 254$ one obtains the following set of mutually disjoint cross-join pairs  
\[
\left\lbrace c_i = \left(2^{8i}, 2^{1+8i}, 127 \cdot 2^{8i}, 127 \cdot 2^{1+8i}\right) \mbox{ for }
i \in \bbra{0,15} \right\rbrace.
\]
From this family we construct $ 2^{16}-1$ NLFSRs. Each of them generates a modified de Bruijn sequence of period $2^{127}-1$ whose feedback function has algebraic degree $125$.

As in the cycle joining method, we give a basic software implementation in~\cite{EFZech} with some randomness added in the procedure. The routine executes the following steps for any order $n$.
\begin{enumerate}
	\item Choose a primitive polynomial $p(x)$ of degree $n$ and generate the $m$-sequence $\bm$ from the initial state $1,\0^{n-1}$.
	\item Build the Zech's logarithm table based on $p(x)$. Note that in this simple implementation we use the approach suggested in Remark~\ref{rem:3}. For larger values of $n$ it is more efficient to obtain the logarithms from MAGMA or its alternatives.
	\item Pick random distinct integers $a$ and $b$ satisfying $a < b < \tau(a) < \tau(b)$.
	\item Construct the matrix $A_{p}$ from Equation (\ref{comx}) and use the square-and-multiply technique based on the binary representations of $a$ and $b$ to quickly find
	\[
	\bm{\alpha} = (1,0,0,\ldots,0) ~ A_{p}^a \mbox{ and } 
	\bm{\beta} = (1,0,0,\ldots,0) ~ A_{p}^b.
	\]
	\item The resulting feedback function of the modified de Bruijn sequence follows from applying Equation (\ref{eq:ff_NLFSR}), using $\bm{\alpha}$ and $\bm{\beta}$, with $h$ derived from $p(x)$.
\end{enumerate}

Here is a small instance to illustrate the process.
\begin{example}
On input $n=5$, the routine selected $p(x)=x^5 + x^2 + 1$. The chosen pair was $(a,b)=(3, 17)$ with 
$(\tau(a),\tau(b))=(22,25)$. Hence, $\bm{\alpha}=(0, 1, 0, 1, 1)$ and $\bm{\beta}=(0, 1, 1, 0, 1)$. The feedback function of the resulting modified de Bruijn sequence 
$(10000~10010~11101~10011~11100~01101~0)$ is $\widetilde{h} = x_0 + x_1 \cdot x_2 \cdot x_4 + x_1 \cdot x_3 \cdot x_4 + x_2$. \qed
\end{example}

\subsection{Fryers' Formula for $m$-Sequences}

Let $S(n) $ be the set of functions $f: \F_{2}^{n-1} \rightarrow \F_{2}$ such that $x_0 + f(x_1,\ldots,x_{n-1})$ generate de Bruijn sequences of order $n$. 
For a function $f : \F_{2}^{n-1} \rightarrow \F_{2}$, we define the set 
\[
S(f;k) \triangleq \{g \in S(n): \mbox{the weight of the function } f + g \mbox{ is } k \}.
\]
The number of inputs for which the functions 
$ f $ and  $ g $ are different equals $ k $. Let $ N(f;k)$ denote the cardinality of $S(f;k)$ and $G(f;y) \triangleq \sum_{k} N(f;k)y^k$. 
Let a linear function $ \ell : \F_{2}^{n-1} \rightarrow \F_{2} $ be such that $x_0 + \ell(x_1,\ldots,x_{n-1})$ generates an 
$ m $-sequence $\m$. In~\cite{CRV17}, Coppersmith, Rhoades, and Vanderkam proved the formula 
\begin{equation}\label{eq:formula}
G(\ell;y) = \frac{1}{2^n} \left((1 + y)^{2^{n-1}} - (1 - y)^{2^{n-1}} \right)
\end{equation}
which they attributed to M.~J.~Fryers. The coefficients  $ N(\ell;k) $ can be calculated by expanding the powers in Equation (\ref{eq:formula}). 

The value $ N(\ell;1) = 1$ implies that from $\m$ we obtain a unique de Bruijn sequence and joining the cycle of the zero state corresponds to one change in the truth table of the function $\ell$. In general, $ N(\ell;k) = 0 $ for all even $k$ since an even number of changes in the truth table of $\ell$ always lead to disjoint cycles. 

We discover an interesting combinatorial view on the non-vanishing ($>0$) coefficients of the polynomial $G(\ell;y)$. Recall that $\ell$ generates an $m$-sequence of period $2^n-1$. Sequence A281123 in OEIS~\cite{OEIS28} gives the formula for the positive coefficients of the polynomial
\[
G(\ell;y)=q(n-1,y) = \frac{(1+y)^{2^{n-1}} - (1-y)^{2^{n-1}}}{2^{n}}. 
\]
Hence, for odd $1 \leq k \leq 2^{n-1}-1$, the formula for $N(\ell;k)$ is  
\begin{equation}\label{eq:combi}
N(\ell;k) = \frac{1}{2^{n-1}} \, \binom{2^{n-1}}{k} \mbox{ for } n \geq 2.
\end{equation}
The number of cross-join pairs for an $m$-sequence, given by the Helleseth and Kl{\o}ve \cite{HK91} formula
\[
N(\ell;3) = \frac{(2^{n-1} - 1)(2^{n-1} - 2)}{3!},
\] 
follows from~(\ref{eq:combi}). The analogous formula for higher $k$  
\begin{equation}\label{eq:higher k}
N(\ell;k=2j+1\geq 5) = \frac{1}{k!} \prod_{i=1}^{k-1} (2^{n-1} - i)
\end{equation}
gives the number of de Bruijn sequences obtained after the $j$-th application of cross-join method. One starts from $\m$ and append $0$ to the longest string of zeroes to obtain a de Bruijn sequence $\bS$. One then finds all of its cross-join pairs and use them to construct new de Bruijn sequences. For each of the sequences repeat the cross-join method $j-1$ times.

Using (\ref{eq:combi}) we easily obtain the number of de Bruijn sequences of order $ n$:
\begin{equation}
G(\ell;1) = \sum_{k=1}^{2^{n-1}-1} N(\ell;k) = \frac{1}{2^{n-1}} \underbrace{\sum_{k=1}^{2^{n-1}-1} \binom{2^{n-1}}{k}}_{:=\Upsilon}=2^{2^{n-1} - n},
\end{equation}
since $\Upsilon=2^{2^{n-1}-1}$ is the sum of the odd entries in row $2^{n-1}$ of the Pascal Triangle. Note that the positive coefficients of $G(\ell;y)$ are symmetric. Let us examine some small orders. 

For $ n = 4$, one gets $\sum_{k=1}^{7} N(\ell;k) = 1 + 7 + 7 + 1 = 2^4$. Hence, from a de Bruijn sequence $\bS$, modified from $\m$, there are $7$ de Bruijn sequences each that can be constructed by using $1$, respectively $2$, application(s) of the cross-join method. There is a unique de Bruijn sequence that requires $3$ applications. For $ n = 5$, we know the number of de Bruijn sequences of distance $j=2k+1$ cross-join pairs from a fixed $m$-sequence of order $5$ from the formula 
\[
\sum_{k=1 \, ; \, k \mbox{ odd}}^{15} N(\ell;k) = 1 + 35 + 273 + 715 + 715 + 273 + 35 + 1 = 2^{11}.
\]

The Fryers formula for $n \geq 6$ can be easily generated by using Equation (\ref{eq:combi}). It tells us when to stop searching for more $j$ cross-join pairs and suggests the following naive approach to generating all de Bruijn sequences of order $n$. 
\begin{enumerate}
	\item Choose an $ m $-sequence $\m$ of period $2^n-1$ and the corresponding linear recursion $\ell$. Find all cross-join pairs in $\m$ and use Equation (\ref{eq:ff_NLFSR}) to construct the corresponding feedback fuctions for the resulting $ N(\ell;3)$ modified de Bruijn sequences.
	\item Find all cross-join pairs for each of the just constructed modified de Bruijn sequences and obtain feedback functions from those cross-join pairs. It may happen that some of the NLFSRs produced by different cross-join pairs are identical. A sieving mechanism is put in place in the second application to retain only distinct feedback functions. There are $N(\ell;5)$ of them.
	\item Repeat the application of the method to the end where only one new NLFSR appears. This ensures that the feedback functions of all modified de Bruijn sequences of order $n$ have been explicitly determined.
\end{enumerate}

This naive approach soon becomes impractical. Depending on the specific deployment needs, one can certainly throw in some random choices and various measures of stoppage to efficiently produce the feedback functions of only some de Bruijn sequences. The route via Zech's logarithms allows us to determine the algebraic degree of the resulting feedback function.

\section{Conclusions}\label{sec:con}

This work approaches the constructions of binary de Bruijn sequences for all orders $n$ via Zech's logarithms and demonstrates their practical feasibility. Many such sequences of large orders are produced by both the cycle joining method and the cross-join pairing within reasonable time and memory expenditures. As computational tools we establish further properties of Zech's logarithms as well as a rapid method to generate irreducible polynomials of degree $m \mid n$ from a primitive polynomial of degree $n$. We have also shown how to generate de Bruijn sequences by using LFSRs whose characteristic polynomial are products of distinct irreducible polynomials.

For a large $n$, storing a de Bruijn sequence of order $n$ is not feasible. Our proof-of-concept implementation determines the feedback functions of the resulting NLFSRs explicitly. This makes the approach highly adaptive to application-specific requirements. In the cycle joining framework one may opt to output subsequent states up to some length, \eg, measured in file size. The initial state and the spanning tree(s) can be randomly chosen. We also demonstrate how to use the Zech logarithms to quickly identify the cross-join pairs in any $m$-sequence, especially the one whose characteristic polynomial is a trinomial or a pentanomial. This enables us to determine the feedback functions of the NLFSRs that generate the resulting modified de Bruijn sequences. We present some consequences of the Fryers formula and use them as computational tools.

Our work also points to many open directions in the studies of NLFSRs that generate de Bruijn sequences or their modified sequences. A comprehensive analysis on their algebraic structure and complexity remains a worthy challenge to undertake.

\begin{acknowledgements}
M.~F.~Ezerman gratefully acknowledges the hospitality of the 
School of Mathematics and Statistics of Zhengzhou University, 
in particular Prof. Yongcheng Zhao, during several visits. 
The work of Z.~Chang is supported by the National Natural Science 
Foundation of China Grant 61772476 and the Key Scientific Research 
Projects of Colleges and Universities in Henan Province Grant 18A110029. 
Research Grants TL-9014101684-01 and MOE2013-T2-1-041 support the research carried out by M.~F.~Ezerman, S.~Ling, and H.~Wang. Singapore Ministry of Education Grant M4011381 provides a partial support for A.~A. Fahreza. J.~Szmidt's work is supported by the Polish Ministry of Science and Higher Education Decision No. 6663/E-294/S/2018.
\end{acknowledgements}


\begin{thebibliography}{10}
	\providecommand{\url}[1]{{#1}}
	\providecommand{\urlprefix}{URL }
	\expandafter\ifx\csname urlstyle\endcsname\relax
	\providecommand{\doi}[1]{DOI~\discretionary{}{}{}#1}\else
	\providecommand{\doi}{DOI~\discretionary{}{}{}\begingroup
		\urlstyle{rm}\Url}\fi
	
	\bibitem{AEB87}
	van Aardenne-Ehrenfest, T., de~{B}ruijn, N.G.: Circuits and trees in oriented
	linear graphs.
	\newblock Simon Stevin \textbf{28}, 203--217 (1951)
	
	
	\bibitem{BCP97}
	Bosma, W., Cannon, J., Playoust, C.: The {M}agma algebra system. {I}. {T}he
	user language.
	\newblock J. Symbolic Comput. \textbf{24}(3-4), 235--265 (1997)
	
	\bibitem{Bro89}
	Broder, A.: Generating random spanning trees.
	\newblock In: Foundations of Computer Science, 1989., 30th Annual Symposium on, pp. 442--447 (1989).
	
	\bibitem{Bruijn46}
	de~Bruijn, N.G.: A combinatorial problem.
	\newblock Koninklijke Nederlandse Akademie v. Wetenschappen \textbf{49},
	758--764 (1946)
	
	\bibitem{CEF19}
	Chang, Z., Ezerman, M.F., Fahreza, A.A.: On greedy algorithms for binary de
	{B}ruijn sequences.
	\newblock CoRR \textbf{pdf/1902.08744} (2019).
	\newblock \urlprefix\url{http://arxiv.org/pdf/1902.08744}
	
	\bibitem{Chang2019}
	Chang, Z., Ezerman, M.F., Ling, S., Wang, H.: On binary de {B}ruijn sequences
	from {LFSR}s with arbitrary characteristic polynomials.
	\newblock Des. Codes Cryptogr. \textbf{87}(5), 1137--1160 (2019)
	
	\bibitem{CPT11}
	Compeau, P., Pevzner, P., Tesler, G.: How to apply de {B}ruijn graphs to genome
	assembly.
	\newblock Nat. Biotechnol. \textbf{29}(11), 987--991 (2011)
	
	\bibitem{CRV17}
	{Coppersmith}, D., {Rhoades}, R.C., {VanderKam}, J.M.: {Counting De {B}ruijn
		sequences as perturbations of linear recursions}.
	\newblock CoRR \textbf{pdf/1705.07835} (2017).
	\newblock \urlprefix\url{http://arxiv.org/pdf/1705.07835}
	
	\bibitem{Ding14}
	Ding, C.: Codes from Difference Sets.
	\newblock World Scientific, Singapore (2014)
	
	\bibitem{Dong16}
	Dong, J., Pei, D.: Construction for de {B}ruijn sequences with large stage.
	\newblock Des. Codes Cryptogr. \textbf{85}(2), 343--358 (2017)
	
	\bibitem{Dra18}
	Dragon, P.B., Hernandez, O.I., Sawada, J., Williams, A., Wong, D.: Constructing
	de {B}ruijn sequences with co-lexicographic order: The k-ary grandmama
	sequence.
	\newblock Eur. J. Comb. \textbf{72}, 1--11 (2018)
	
	\bibitem{Dub13}
	Dubrova, E.: A scalable method for constructing {G}alois {NLFSR}s with period
	$2^n-1$ using cross-join pairs.
	\newblock IEEE Trans. on Inf. Theory \textbf{59}(1), 703--709 (2013)
	
	\bibitem{EL84}
	Etzion, T., Lempel, A.: Algorithms for the generation of full-length
	shift-register sequences.
	\newblock IEEE Trans. Inf. Theory \textbf{30}(3), 480--484 (1984)
	
	\bibitem{EFZech}
	Ezerman, M.F., Fahreza, A.A.: A binary de {B}ruijn sequence generator from
	{Z}ech's logarithms.
	\newblock \url{https://github.com/adamasstokhorst/ZechdB} (2017)
	
	\bibitem{EF18}
	Ezerman, M.F., Fahreza, A.A.: A binary de {B}ruijn sequence generator from
	product of irreducible polynomials.
	\newblock \url{https://github.com/adamasstokhorst/debruijn} (2018)
	
	\bibitem{Fred75}
	Fredricksen, H.: A class of nonlinear de {B}ruijn cycles.
	\newblock J. Combinat. Theory, Ser. A \textbf{19}(2), 192 -- 199 (1975)
	
	\bibitem{Fred82}
	Fredricksen, H.: A survey of full length nonlinear shift register cycle
	algorithms.
	\newblock SIAM Rev. \textbf{24}(2), 195--221 (1982)
	
	\bibitem{Golomb81}
	Golomb, S.W.: Shift Register Sequences.
	\newblock Aegean Park Press, Laguna Hills (1981)
	
	\bibitem{GG05}
	Golomb, S.W., Gong, G.: Signal Design for Good Correlation: for Wireless
	Communication, Cryptography, and Radar.
	\newblock Cambridge Univ. Press, New York (2004)
	
	\bibitem{HH96}
	Hauge, E.R., Helleseth, T.: De {B}ruijn sequences, irreducible codes and
	cyclotomy.
	\newblock Discrete Math. \textbf{159}(1-3), 143--154 (1996)
	
	\bibitem{HM96}
	Hauge, E.R., Mykkeltveit, J.: On the classification of de {B}ruijn sequences.
	\newblock Discrete Math. \textbf{148}(1–3), 65 -- 83 (1996)
	
	\bibitem{HK91}
	Helleseth, T., Kl{\o}ve, T.: The number of cross-join pairs in maximum length
	linear sequences.
	\newblock IEEE Trans. on Inf. Theory \textbf{37}(6), 1731--1733 (1991)
	
	\bibitem{Huang90}
	Huang, Y.: A new algorithm for the generation of binary de bruijn sequences.
	\newblock J. Algorithms \textbf{11}(1), 44 -- 51 (1990)
	
	\bibitem{Huber90}
	Huber, K.: Some comments on {Z}ech's logarithms.
	\newblock IEEE Trans. on Inf. Theory \textbf{36}(4), 946--950 (1990)
	
	\bibitem{Hume2014}
	Hume, M.A., Barrera, L.A., Gisselbrecht, S.S., Bulyk, M.L.: {UniPROBE}, update
	2015: new tools and content for the online database of protein-binding
	microarray data on protein{\textendash}{DNA} interactions.
	\newblock Nucleic Acids Research \textbf{43}(D1), D117--D122 (2014)
	
	\bibitem{Jacobi}
	Jacobi, C.: \"{U}ber die kreistheilung und ihre anwendung auf die
	zahlentheorie.
	\newblock J. Reine Angew. Math. \textbf{30}, 166--182 (1846)
	
	\bibitem{JFB91}
	Jansen, C.J.A., Franx, W.G., Boekee, D.E.: An efficient algorithm for the
	generation of de {B}ruijn cycles.
	\newblock IEEE Trans. on Inf. Theory \textbf{37}(5), 1475--1478 (1991)
	
	\bibitem{Knuth4A}
	Knuth, D.E.: The Art of Computer Programming. Vol.~4A., Combinatorial
	Algorithms. Part 1.
	\newblock Addison-Wesley, Upple Saddle River (N.J.), London, Paris (2011)
	
	\bibitem{Lempel70}
	Lempel, A.: On a homomorphism of the de {B}ruijn graph and its applications to
	the design of feedback shift registers.
	\newblock IEEE Trans. Comput. \textbf{C-19}(12), 1204--1209 (1970)
	
	\bibitem{LN97}
	Lidl, R., Niederreiter, H.: Finite Fields.
	\newblock Encyclopaedia of Mathematics and Its Applications. Cambridge Univ.
	Press, New York (1997)
	
	\bibitem{MG13}
	Mandal, K., Gong, G.: Cryptographically strong de {B}ruijn sequences with large
	periods.
	\newblock In: L.R. Knudsen, H.~Wu (eds.) Selected Areas in Cryptography, pp.
	104--118. Springer, Berlin, Heidelberg (2013)
	
	\bibitem{Menezes:1996}
	Menezes, A.J., Vanstone, S.A., Oorschot, P.C.V.: Handbook of Applied
	Cryptography, 1st edn.
	\newblock CRC Press, Inc., Boca Raton, FL, USA (1996)
	
	\bibitem{MS15}
	Mykkeltveit, J., Szmidt, J.: On cross joining de {B}ruijn sequences.
	\newblock Contemporary Mathematics \textbf{632}, 333--344 (2015)
	
	\bibitem{Robshaw2008}
	Robshaw, M., Billet, O. (eds.): New Stream Cipher Designs.
	\newblock Springer Berlin Heidelberg (2008)
	
	\bibitem{Samatham1989}
	Samatham, M.R., Pradhan, D.K.: The de {B}ruijn multiprocessor network: {A}
	versatile parallel processing and sorting network for {VLSI}.
	\newblock IEEE Trans. Comput. \textbf{38}(4), 567--581 (1989)
	
	\bibitem{SWW16}
	Sawada, J., Williams, A., Wong, D.: Generalizing the classic greedy and
	necklace constructions of de {B}ruijn sequences and universal cycles.
	\newblock Electr. J. Comb. \textbf{23}(1), P1.24 (2016)
	
	\bibitem{OEIS43}
	Sloane, N.J.A.: The {O}nline {E}ncyclopedia of {I}nteger {S}equences: Mersenne
	exponents, primes $p$ such that $2^{p}-1$ is ({M}ersenne) prime.
	\newblock \urlprefix\url{https://oeis.org/A000043}
	
	\bibitem{OEIS28}
	Sloane, N.J.A.: The {O}nline {E}ncyclopedia of {I}nteger {S}equences:
	{T}riangle {T} read by rows.
	\newblock \urlprefix\url{https://oeis.org/A281123}
	
	\bibitem{SW14}
	Stevens, B., Williams, A.: The coolest way to generate binary strings.
	\newblock Theory Comput. Syst. \textbf{54}(4), 551--577 (2014)
	
	\bibitem{Storer67}
	Storer, T.: Cyclotomy and Difference Sets.
	\newblock Lectures in Advanced Mathematics. Markham Pub. Co., Chicago, USA
	(1967)
	
\end{thebibliography}


\end{document}